\providecommand{\U}[1]{\protect\rule{.1in}{.1in}}
\providecommand{\U}[1]{\protect\rule{.1in}{.1in}}
\newtheorem{assumption}{Assumption}
\newtheorem{theorem}{Theorem}
\newtheorem{lemma}{Lemma}
\newtheorem{remark}{Remark}
\newtheorem{definition}{Definition}
\newtheorem{property}{Property}
\useunder{\uline}{\ul}{}
\newcommand{\multiline}[1]{  \begin{tabularx}{\dimexpr\linewidth-\ALG@thistlm}[t]{@{}X@{}}
#1
\end{tabularx}
}
\setlist[itemize]{leftmargin=*}
\newcommand{\tsup}[1]{\textsuperscript{#1}}
\newcommand{\T}{\top}
\newcommand{\I}{\mathbf{I}}
\newcommand{\0}{\mathbf{0}}
\newcommand{\bmtx}[1]{\begin{bmatrix}#1\end{bmatrix}}
\begin{document}

\title{\bf\Large Graph Neural Network-Based Distributed Optimal Control for Linear Networked Systems: An Online Distributed Training Approach}

\author{Zihao Song, Shirantha Welikala, Panos J. Antsaklis and Hai Lin\thanks{This work was supported by the National Science Foundation under Grant IIS-2007949. Zihao Song, Panos J. Antsaklis and Hai Lin are with the Department of Electrical Engineering, University of Notre Dame, Notre Dame, IN 46556 USA (e-mail: zsong2@nd.edu; pantsakl@nd.edu; hlin1@nd.edu;), and Shirantha Welikala is with the Department of Electrical and Computer Engineering, Stevens Institute of Technology, Hoboken, NJ 07030 USA (e-mail: swelikal@stevens.edu).}}
\maketitle
\thispagestyle{empty}

\begin{abstract}
   In this paper, we consider the distributed optimal control problem for linear networked systems. 
   In particular, we are interested in learning distributed optimal controllers using graph recurrent neural networks (GRNNs).
   Most of the existing approaches result in centralized optimal controllers with offline training processes.
   However, as the increasing demand of network resilience, the optimal controllers are further expected to be distributed, and are desirable to be trained in an online distributed fashion, which are also the main contributions of our work.
   To solve this problem, we first propose a GRNN-based distributed optimal control method, and we cast the problem as a self-supervised learning problem. Then, the distributed online training is achieved via distributed gradient computation, and inspired by the (consensus-based) distributed optimization idea, a distributed online training optimizer is designed. 
   Furthermore, the local closed-loop stability of the linear networked system under our proposed GRNN-based controller is provided by assuming that the nonlinear activation function of the GRNN-based controller is both local sector-bounded and slope-restricted.
   The effectiveness of our proposed method is illustrated by numerical simulations using a specifically developed simulator\footnote{Publicly available at \href{https://github.com/NDzsong2/fdTrainGRNN\textunderscore NetCtrl-main.git}{https://github.com/NDzsong2/fdTrainGRNN\textunderscore NetCtrl-main.git}}.
\end{abstract}

\section{Introduction}\label{sec:intro}

Designing distributed optimal controllers for networked systems has gained plenty of attention over the years due to the wide applications in many fields, such as intelligent transportation systems, power grid networks, and robotic networks. 
The main objective is to design a controller with a prescribed structure, as opposed to the traditional centralized controller, for a networked system consisting of an arbitrary number of interacting local subsystems. 
The main challenges of this problem lie in the sparsity of the communication topologies between subsystems and the scalability of the control methods with respect to the network size. As is shown in \cite{fardad2011sparsity,fattahi2018transformation}, the sparsity of the communication topologies may cause the non-convexity of the distributed optimal control problem or even make the problem intractable. Besides, as the network size grows, the performance of the controllers may not be preserved.

Existing works for the distributed optimal control problem focus on distributed optimization methods (e.g., consensus-based methods \cite{nedic2018distributed} and primal-dual decomposition \cite{lee2016distributed}), distributed model predictive control (DMPC) \cite{stewart2010cooperative}, multi-agent reinforcement learning (MARL) \cite{fan2023multi}, or other learning-based methods (e.g., a distinct multi-layer perceptron (MLP) \cite{farzanegan2024data} and adaptive critic control \cite{wang2020approximate}). 
Even though these methods admit convex formulations, they may usually result in complex solutions that do not scale with the size of the networks.


Compared to traditional approaches, learning-based methods do not require the full model knowledge and provide more adaptability to complex systems and better robustness with the minimum communication overheads. 
Among the learning-based methods, Graph Neural Networks (GNNs) \cite{gama2022distributed}, as a special type of NNs, are especially suitable for the distributed optimal control problem, since the networked systems can be naturally modeled as graphs, where the subsystems are modeled as the nodes and the communication links are represented by edges. 
In particular, GNNs consist of a cascade of blocks (commonly known as layers), each of which applies a bank of graph filters followed by an element-wise nonlinearity to aggregate data from different nodes \cite{gama2022distributed} and can be trained inductively without the knowledge of the whole graph (as oppose to transductive training).
Moreover, GNNs are naturally local and distributed (in each layer), permutation equivariant and Lipschitz continuous to changes in the network.
These unique advantages make GNNs scalable (with respect to the computational complexity and graph size) and transferable. 

So far, graph convolution neural networks (GCNNs) and graph recurrent neural networks (GRNNs) have been used to solve the distributed optimal control problem, as seen in \cite{yang2021communication,gama2022distributed}. 
However, the GNN models applied in these works are actually not distributed since the multi-layer convolution operation (as in GCNN models in \cite{gama2022distributed}) or the information propagation between the hidden states (as in GRNN models \cite{yang2021communication}) make the input data for each node depending not only on itself and its neighboring nodes but also on the neighbors' neighboring nodes.
The basic architecture and the information dependencies of the existing GCNN and GRNN models can be seen in Fig. \ref{Fig:existing_GNN_architectures}.
Despite recent progress in GNNs \cite{fallin2025lyapunov}, GNN-based controller design remains challenging, especially when the optimality is required. 
A comparison table is shown in Tab. \ref{tab:existing_methods_comparison} for existing methods.

\begin{table}[!h]
\caption{Comparison of different methods.}
\label{tab:existing_methods_comparison}
\begin{tabular}{c|c|c|c|c}
\hline
Method                                                   & \begin{tabular}[c]{@{}c@{}}Nonlinear \\ Systems\end{tabular}            & \begin{tabular}[c]{@{}c@{}}Non-convex \\ costs\end{tabular}          & \begin{tabular}[c]{@{}c@{}}Physical \\ Constr.\end{tabular}     & Scalability                                                                                                      \\ \hline
GCNNs                                                    & \begin{tabular}[c]{@{}c@{}}Yes (via\\ learning)\end{tabular}            & \begin{tabular}[c]{@{}c@{}}Possible \\ (in training)\end{tabular}   & \begin{tabular}[c]{@{}c@{}}Easily \\ involved\end{tabular}  & \begin{tabular}[c]{@{}c@{}} $O(N^2)$ or \\ $O(N^3)$ \end{tabular} \\ \hline
\begin{tabular}[c]{@{}c@{}}Primal\\ Decomp.\end{tabular} & \begin{tabular}[c]{@{}c@{}}Limited\\ (convex only)\end{tabular}         & \begin{tabular}[c]{@{}c@{}}No (requires \\ convexity)\end{tabular}   & \begin{tabular}[c]{@{}c@{}}Coupling \\ constraints\end{tabular} & \begin{tabular}[c]{@{}c@{}}$O(NM^2)$\end{tabular}                                   \\ \hline
\begin{tabular}[c]{@{}c@{}}Dual\\ Decomp.\end{tabular}   & \begin{tabular}[c]{@{}c@{}}Limited\\ (convex only)\end{tabular}         & \begin{tabular}[c]{@{}c@{}}No (strong \\ duality fails)\end{tabular} & \begin{tabular}[c]{@{}c@{}}Only soft \\ constraints\end{tabular}    & \begin{tabular}[c]{@{}c@{}}$O(NM^2)$ \\ $+$ delays\end{tabular}               \\ \hline
DMPC                                                     & \begin{tabular}[c]{@{}c@{}}Yes (nonlinear \\ MPC possible)\end{tabular} & \begin{tabular}[c]{@{}c@{}}Heuristic \\ only\end{tabular}            & \begin{tabular}[c]{@{}c@{}}Hard \\ constraints\end{tabular}  & \begin{tabular}[c]{@{}c@{}}$O(N\times$ \\ $T\times M^3)$\end{tabular}        \\ \hline
\begin{tabular}[c]{@{}c@{}}Our \\ Method\end{tabular}  & \begin{tabular}[c]{@{}c@{}}Yes (spatial \\ $+$ temporal) \end{tabular} & \begin{tabular}[c]{@{}c@{}}Yes \end{tabular}            & \begin{tabular}[c]{@{}c@{}}Penalties \\ or layers\end{tabular}  & \begin{tabular}[c]{@{}c@{}}$O(N)$\end{tabular}        \\ \hline
\end{tabular}
\end{table}

\begin{figure}[!t]
    \vspace{2mm}
    \centering
    \begin{subfigure}{0.4\textwidth}
        \includegraphics[width=\linewidth]{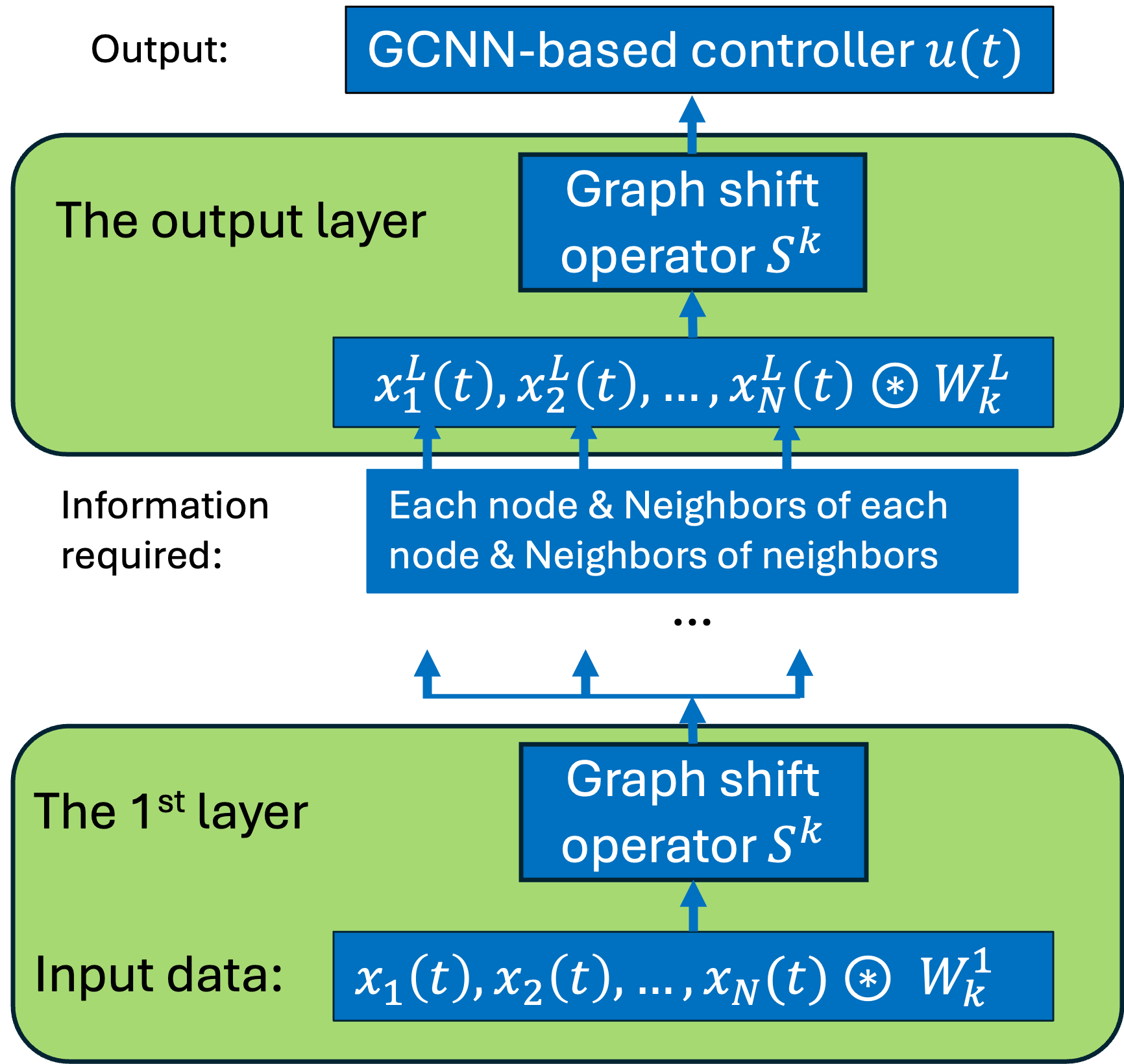}
        \vspace{-3.5mm}
        \caption{Existing works using GCNN-based controller \cite{gama2022distributed}}
        \label{Fig:existing_GCNN_architecture}
    \end{subfigure}
    \hfill
    \begin{subfigure}{0.4\textwidth}
        \includegraphics[width=\linewidth]{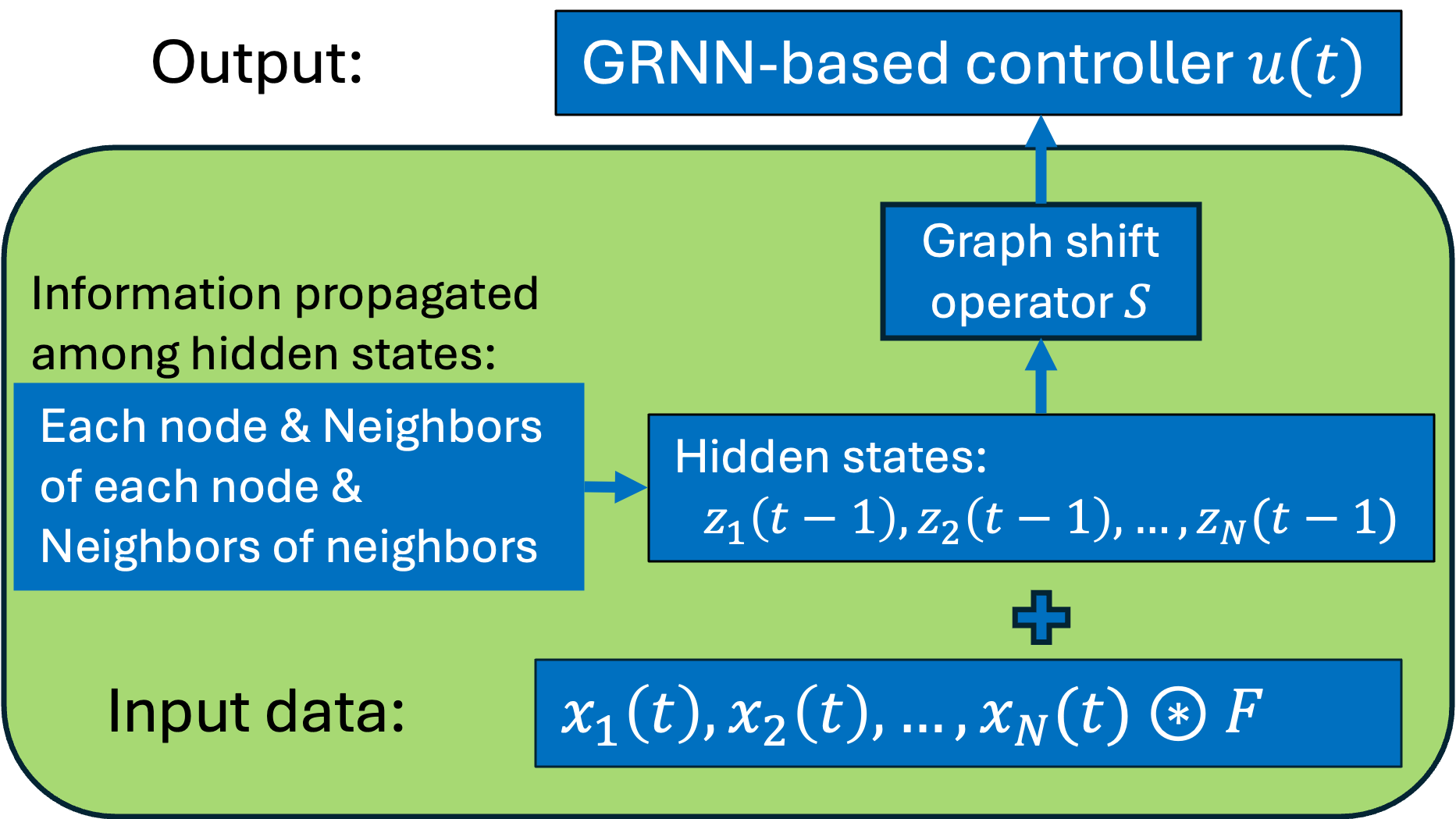}
        \vspace{-3.5mm}
        \caption{Existing works using GRNN-based controller \cite{yang2021communication}}
        \label{Fig:existing_GRNN_architecture}
    \end{subfigure}
    \caption{Architectures of the closed-loop networked system: (a) Existing works using GCNN-based controller \cite{gama2022distributed}; (b) existing works using GRNN-based controller \cite{yang2021communication}.}
    \label{Fig:existing_GNN_architectures}
\end{figure}


Apart from the distributed optimal control design, training the GNN-based controllers in an online distributed manner is another key point in this problem, as required by the increasing demand for network resilience. 
Even though there are some GNN-related works in network resilient tasks, they usually consider the problems such as edge rewiring \cite{yang2023learning}, critical node/link identification \cite{munikoti2022scalable}, and fault/attack detection \cite{baahmed2023using} and reinforcement learning-based control \cite{almasan2022deep}. Nevertheless, these methods do not consider the feedback dynamics structure, and the closed-loop stability of the entire network is generally not guaranteed.
Through online distributed training, the adaptivity of the controllers can be correspondingly enhanced to adapt to network perturbations, failures, or adversarial attacks so that the performance of the network will not deteriorate.



Motivated by the above discussions, inspired by \cite{olshevskyi2024fully}, we aim to consider the distributed optimal control problem using a GRNN-based method, which is trained in an online distributed manner.
We first propose our GRNN-based controller for the distributed optimal control problem, and we cast this problem as a self-supervised learning problem. 
Then, the online distributed learning process is shown via the distributed gradient computation and online weights update law, where this update law is inspired by consensus-based distributed optimization. 
Furthermore, we prove the local closed-loop stability by assuming that the nonlinear activation function of the GRNN-based controller is both local sector-bounded and slope-restricted.
Eventually, the effectiveness of our results is verified by a specifically developed simulator. 

Our primary contributions can be summarized as follows:
\begin{enumerate}
    \item Different from the existing GNN-based methods \cite{yang2021communication,gama2022distributed}, which are actually not distributed, we propose a GRNN-based distributed optimal control method;
    \item Our GRNN-based distributed optimal controllers are trained in an online distributed fashion;
    \item The closed-loop stability of the networked system under our GRNN-based controller is proved by assuming the nonlinear activation function of our GRNN-based controller is both local sector-bounded and slope-restricted;
    \item The effectiveness of our proposed method is illustrated by a specifically developed simulator.
\end{enumerate}

The remainder of this paper is organized as follows. Some preliminaries and the problem formulation are presented in Section \ref{sec:background}. Our main results are presented in Section \ref{sec:main_results}, and are supported by a simulation example in Section \ref{sec:simulation}. Finally, concluding remarks are provided in Section \ref{sec:conclusion}.

\section{Background}\label{sec:background}
\paragraph*{\textbf{Notations}}

The sets of real, positive real, and natural numbers are denoted by $\mathbb{R}$, $\mathbb{R}_+$, and $\mathbb{N}$, respectively. $\mathbb{R}^{n\times m}$ denotes the space of real matrices with $n$ rows and $m$ columns. 
An $n$-dimensional real vector is denoted by $\mathbb{R}^n$. 
$\mathbb{R}^{n}_+$ denotes the sets that contain all component-wise nonnegative vectors in $\mathbb{R}^n$.
$\mathcal{I}_N:=\{1, 2,...,N\}$ is the index sets, where $N\in \mathbb{N}$. 
The zero and identity matrices are denoted by $\0$ and $\I$, respectively (dimensions will be obvious from the context). 
A block matrix $A\in\mathbb{R}^{n\times m}$ is represented as $A:=[A_{ij}]_{i\in\mathcal{I}_n, j\in\mathcal{I}_m}$, where $A_{ij}$ is the $(i,j)$\tsup{th} block of $A$. $[A_{ij}]_{j\in \mathcal{I}_m}$ represents a block row matrix.
$\mbox{diag}(a)\in\mathbb{R}^{n\times n}$ is a diagonal matrix generated by $a=[a_i]^\T\in\mathbb{R}^n$, where each scalar $a_i$ is on its diagonal (dimensions will be obvious from the context).
For a vector matrix $X=[x_i]_{i\in\mathcal{I}_N}^\T\in\mathbb{R}^{N\times n}$ with each $x_i\in\mathbb{R}^n$, we define $\mbox{vec}(X^\T):=[x_i^{\top}]_{i\in\mathcal{I}_N}^{\top}\in\mathbb{R}^{nN}$ as its vectorized form.  
For two vectors $x$, $y\in\mathbb{R}^n$, we use $x\preceq y$ ($x\succeq y$) to indicate that every element of $x-y$ is non-positive (non-negative), i.e., $x_i\leq y_i$ ($x_i\geq y_i$) for all $i\in \mathcal{I}_n$.
For matrix $A\in\mathbb{R}^{n\times m}$, its spectral norm is denoted by $\|A\|$.
For a vector $x\in\mathbb{R}^n$, its Euclidean norm is given by $|x|_2 := |x| = \sqrt{x^{\top}x}$. 
For a time dependent sequence $x(t):\mathbb{N}_0\rightarrow\mathbb{R}^n$, $\ell_2^n$ is the set of sequences $x(t)$ with $\|x(t)\|_2:=\sqrt{\sum_{\tau=0}^{\infty} x^\T(\tau)x(\tau)}<\infty$.    

\subsection{Preliminaries}

Due to the element-wise nonlinear activation functions in our GRNN-based distributed optimal controllers, to show the stability of the closed-loop networked system, we need to approximate the output boundedness of the GRNN model. 
Therefore, we recall the following definitions of local sector and slope constraints. 

\begin{definition}(\textit{Local Sector-Bounded Nonlinearity \cite{wu2022stability}})\label{def:local_sector-bounded_nonlinearity}
    A nonlinear function $\sigma:\mathbb{R}\rightarrow\mathbb{R}$ is locally sector-bounded in $[\underline{\alpha},\ \bar{\alpha}]$ ($\underline{\alpha}\leq\bar{\alpha}\in\mathbb{R}$) on the interval $[\underline{\nu},\ \bar{\nu}]\subset \mathbb{R}$ if 
    $$(\sigma(\nu)-\underline{\alpha}\nu)(\sigma(\nu)-\bar{\alpha}\nu)\leq 0,\ \forall \nu\in[\underline{\nu},\ \bar{\nu}].$$
\end{definition}

\begin{definition}(\textit{Local Slope-Restricted Nonlinearity \cite{yin2021stability}})\label{def:local_slope-restricted_nonlinearity}
    A nonlinear function $\sigma:\mathbb{R}\rightarrow\mathbb{R}$ is locally slope-restricted in $[\underline{\sigma},\ \bar{\sigma}]$ ($\underline{\sigma}\leq\bar{\sigma}\in\mathbb{R}$) on the interval $[\underline{\nu},\ \bar{\nu}]\subset \mathbb{R}$ around the point ($\nu^*,\sigma(\nu^*)$), $\nu^*\in[\underline{\nu},\ \bar{\nu}]$, if 
    $$(\Delta\sigma(\nu)-\underline{\sigma}\Delta\nu)(\Delta\sigma(\nu)-\bar{\sigma}\Delta\nu)\leq 0,\ \forall \nu\in[\underline{\nu},\ \bar{\nu}],$$
    where $\Delta\sigma(\nu):=\sigma(\nu)-\sigma(\nu^*)$, and $\Delta\nu:=\nu-\nu^*$.
\end{definition}

\begin{remark}
    It is worth noting that the commonly used element-wise nonlinear activation functions, e.g., $\tanh$, (Leaky) ReLu, sigmoid, are all local sector bounded and local slope restricted \cite{yin2021stability}. 
    In particular, for some input $\nu\in\mathbb{R}$, $\tanh(\nu)$ and (Leaky) ReLu $\sigma(\nu):=\max(a\nu, \nu)$ with $a\in[0,1)$ ($a=0$ is only for Relu), are also global sector bounded and slope restricted, while for sigmoid activation function $\sigma(\nu):=\frac{1}{1+\exp(-\nu)}$,
    only local sector boundedness and slope restriction are guaranteed.
\end{remark}

Based on the above definitions, we further recall the time domain integral quadratic constraints (IQCs) for stability analysis of the discrete-time systems with NN controllers.
\begin{definition}(\textit{Time Domain IQCs \cite{wu2022stability}})\label{Def:time_domain_IQC}
    Suppose that a set $\mathcal{Q}$ is the positive semi-definite cone, where any $Q\in\mathcal{Q}$ is a symmetric matrix. 
    Consider a discrete-time LTI system $\Psi:\ell_2^p\times\ell_2^p\rightarrow\ell_2^r$ as
    \begin{equation}
        \begin{split}
            \psi(t+1) &= A_{\psi} \psi(t)+ B^h_{\psi} h(t)+ B^z_{\psi} z(t),       \\
            q(t) &= C_{\psi} \psi(t)+ D^h_{\psi} h(t)+ D^z_{\psi} z(t),
        \end{split}
    \end{equation}
    where $\psi(0)=\0$. 
    Two sequences $h$, $z\in\ell_2^p$ satisfy the IQC defined by ($\Psi,\mathcal{Q}$), if the output sequence $q\in\ell_2^r$ satisfies
    \begin{equation}
        \sum_{\tau=0}^{t} q^\T(\tau)Qq(\tau)\geq 0,
    \end{equation}
    for any $Q\in\mathcal{Q}$, and $t\in\mathbb{N}$, and $\Psi:=\scriptsize\bmtx{\begin{array}{c|c c} A_{\Psi} & B^h_{\Psi} & B^z_{\Psi} \\  \hline 
            C_{\Psi} & D^h_{\Psi} & D^z_{\Psi}
            \end{array}}$.
\end{definition}


\subsection{Problem Formulation}

Consider a linear networked system $\Sigma$ with the $i\tsup{th}$ subsystem $\Sigma_i$ (for all $i\in\mathcal{I}_N$) being:
\begin{equation}\label{Eq:ith_subsystem}
    \Sigma_i: x_i(t+1) = \sum_{j\in\mathcal{I}_N} A_{ij}x_j(t)+B_{ij}u_j(t),
\end{equation}
where $x_i\in\mathbb{R}^{n}$ is the system states, $u_i\in\mathbb{R}^{m}$ is the system input,
$A_{ij}\in\mathbb{R}^{n\times n}$ is the system matrix and $B_{ij}\in\mathbb{R}^{n\times m}$ is the input mapping matrix. 
Here, the interconnection topologies between subsystems in \eqref{Eq:ith_subsystem} are assumed as directed graphs $G(\mathcal{V},\mathcal{E})$, where $\mathcal{V}\equiv\mathcal{I}_N$ is the node set, and $\mathcal{E}\subseteq \mathcal{V}\times\mathcal{V}$ is the edge set.
Here, the $i\tsup{th}$ node in the graph $G$ represents the $i\tsup{th}$ subsystem in \eqref{Eq:ith_subsystem}. If node $i$ can receive information from node $j$, $A_{ij}$, $B_{ij}\neq \0$; otherwise, $A_{ij}$, $B_{ij}= \0$.

If we stack all the subsystems $\Sigma_i$ ($i\in\mathcal{I}_N$) into a vector form, the linear networked system $\Sigma$ becomes:
\begin{equation}\label{Eq:stacked_networked_system}
    \Sigma: \mbox{vec}(X^\T(t+1)) = A\mbox{vec}(X^\T(t))+B\mbox{vec}(U^\T(t)),
\end{equation}
where $A=[A_{ij}]_{i,j\in\mathcal{I}_N}\in\mathbb{R}^{nN\times nN}$ and $B=[B_{ij}]_{i,j\in\mathcal{I}_N}\in\mathbb{R}^{nN\times mN}$ are the stacked system matrix and the stacked input mapping matrix, respectively. $X(t)=[x_i(t)]_{i\in\mathcal{I}_N}^\T\in\mathbb{R}^{N\times n}$ and $U(t) = [u_i(t)]_{i\in\mathcal{I}_N}^\T\in\mathbb{R}^{N\times m}$ are the stacked state vector and the stacked control input vector, respectively.
For the networked system $\Sigma$ in \eqref{Eq:stacked_networked_system}, we assume that the pair $(A,B)$ is controllable. 
Moreover, we assume that there is no time delay between the transmission between subsystems.

Then, we can formulate the GRNN-based distributed optimal control problem as:
\begin{subequations}\label{Eq:distributed_optimal_control_original}
    \begin{align}
        \min_{\phi\in\Phi}\ \ & J\big(\{X(t)\}_{t=0}^{\infty},\ \{U(t)\}_{t=0}^{\infty}\big)  \label{Eq:centralized_cost}  \\
        s.t.\ & \eqref{Eq:stacked_networked_system},\ \forall t\in\{0,1,...\},   \nonumber  \\
        & U(t) = \phi(X(t);G),\ \forall t\in\{0,1,...\}.  \label{Eq:distributed_optimal_controller}
    \end{align}
\end{subequations}

The objective of this paper is to learn a GRNN-based distributed optimal controller as in \eqref{Eq:distributed_optimal_controller} in an online distributed fashion for the networked system $\Sigma$ in \eqref{Eq:stacked_networked_system}, such that the closed-loop networked system is stable.

\begin{remark}
    For the problem in \eqref{Eq:distributed_optimal_control_original}, traditional model-based approaches are either centralized or not scalable, due to the sparsity constraints caused by the interconnection topologies and the dramatic increase of the computational complexity as the network size grows, respectively.    
    In addition, as illustrated in Section \ref{sec:intro}, existing GCNN- and GRNN-based control methods are not distributed (as seen in Fig. \ref{Fig:existing_GNN_architectures}), which is not desirable for the current network resilience requirements.
    Motivated by these observations, we choose to design a GRNN-based distributed control method for the problem in \eqref{Eq:distributed_optimal_control_original}, where the hidden states possess dynamical structure to adapt to the changes in the networks.
    To make the controller distributed, no multi-layer convolutional operations for the input data or information propagation in hidden states are allowed in our model.
\end{remark}

\begin{remark}
    From the individual subsystem $\Sigma_i$ in \eqref{Eq:ith_subsystem} and our control objectives, it is worth noting that the interconnection topology applied in the distributed controller $U(t)$ is free to be selected from a subset of the interconnection links in \eqref{Eq:stacked_networked_system}.
\end{remark}

\section{Main Results}\label{sec:main_results}

In this section, we propose an online distributed training method for the GRNN-based distributed optimal controller. 
We first introduce the GRNN structure (to be further trained as our distributed optimal controller). Then, we show the algorithm for training the GRNN-based distributed optimal controller in an online distributed manner. Finally, the closed-loop stability is proved for the networked system $\Sigma$ in \eqref{Eq:stacked_networked_system} under our learned controller.

\subsection{Online Distributed Learning for GRNN-based Distributed Optimal Controller}\label{subsec:distributedGCNN}

In the case of the centralized training, the GRNN-based distributed optimal controller $U(t)$ is designed as:
\begin{align}\label{Eq:distributed_optimal_controller_GRNN}
    Z(t) &= \sigma\Big(Z(t-1)\Theta_1+X(t)\Theta_2+SX(t)\Theta_3\Big),  \nonumber    \\
    U(t) &= Z(t)\Theta_4,   
\end{align}
where $X(t)\in\mathbb{R}^{N\times n}$ is the input data of the GRNN that collects the real-time states of the networked system \eqref{Eq:stacked_networked_system}, for all $t\in\{0,1,...\}$. 
$Z(t)\in\mathbb{R}^{N\times p}$ is the full internal state of the GRNN by stacking all the subsystems' internal states as $Z(t):=[z_i(t)]_{i\in\mathcal{I}_N}^\T$  with each $z_i(t)\in\mathbb{R}^p$, for all $t\in\{0,1,...\}$. 
Here, $\Theta_1\in\mathbb{R}^{p\times p}$, $\Theta_4\in\mathbb{R}^{p\times m}$, and $\Theta_2$, $\Theta_3\in\mathbb{R}^{n\times p}$ are the trainable parameters (or weights), $\sigma$ is an element-wise nonlinear activation function, and $S\in\mathbb{R}^{N\times N}$ is the graph shift operator, characterizing the interconnection topologies between neighboring subsystems for the controller \eqref{Eq:distributed_optimal_controller_GRNN}. 
For the graph shift operator $S:=[S_{ij}]_{i,j\in\mathcal{I}_N}$, the element $S_{ij}$ can be non-zero iff $i=j$ or $j\in\mathcal{N}_i:=\{i:(v_j,v_i)\in\mathcal{E}\}$, and is zero, otherwise.
$S$ can be selected as the (normalized) adjacency matrix or the (normalized) Laplacian matrix of the interconnection topological graph.

Note that even though the controller in \eqref{Eq:distributed_optimal_controller_GRNN} can be implemented in a distributed manner towards all subsystems, it is typically trained in a centralized manner, since the entire input data for all nodes are required during training and the weights are shared among all the nodes. 
In this way, it cannot adjust the weights for each subsystems online. Thus, the resulting controller \eqref{Eq:distributed_optimal_controller_GRNN} follow a paradigm of 'centralized training, distributed execution'.

To enable online distributed adjustment of the weights of the GRNN-based distributed optimal controller using only local information, we design the local form of the GRNN-based distributed optimal controller at the $i\tsup{th}$ subsystem as:
\begin{subequations}\label{Eq:distributed_optimal_control_GRNN_local}
    \begin{align}
        Z_{[i,:]}(t) &= \sigma(h_{[i,:]}(t)),  \label{Eq:Zt_[i*]}  \\
        h_{[i,:]}(t) &= Z_{[i,:]}(t-1)\Theta_{1i}(t)+X_{[i,:]}(t)\Theta_{2i}(t)+  \nonumber \\
        & \sum_{j\in\mathcal{N}_i\cup\{i\}} S_{ij}X_{[j,:]}(t)\Theta_{3i}(t),  \label{Eq:ht_[i*]}    \\
        U_{[i,:]}(t) &= Z_{[i,:]}(t)\Theta_{4i}(t),    \label{Eq:ut_[i*]}
    \end{align}
\end{subequations}
for any time $t\in\{0,1,...\}$, where $(\cdot)_{[i,:]}(t)$ represent the $i\tsup{th}$ row of the matrix $(\cdot)(t)$, and thus, the GRNN model in \eqref{Eq:Zt_[i*]}-\eqref{Eq:ut_[i*]} depends only on the states of the $i\tsup{th}$ subsystem itself and its neighboring subsystems, and the weights of the $i\tsup{th}$ subsystem $\Theta_{1i}$, $\Theta_{2i}$, $\Theta_{3i}$ and $\Theta_{4i}$. 
Initially, $Z_{[i,:]}(-1) = \0$, and $\Theta_{1i}(0)$, $\Theta_{2i}(0)$, $\Theta_{3i}(0)$ and $\Theta_{4i}(0)$ are the local copies of the centralized weights $\Theta_1$, $\Theta_2$, $\Theta_{3}$ and $\Theta_{4}$ in \eqref{Eq:distributed_optimal_controller_GRNN} at the $i\tsup{th}$ subsystem, respectively, i.e., $\Theta_{1i}(0)=\Theta_1$, $\Theta_{2i}(0)=\Theta_2$, $\Theta_{3i}(0)=\Theta_3$ and $\Theta_{4i}(0)=\Theta_4$, for all $i\in\mathcal{I}_N$ and $j\in\mathcal{N}_i$. 
During the online training, these local copies can be updated separately and will not be the local copies of the centralized weights anymore, and the centralized weights in \eqref{Eq:distributed_optimal_controller_GRNN} can be approximated by $\Theta_k=\frac{1}{N}\sum_{i\in\mathcal{I}_N}\Theta_{ki}$, for all $k\in\mathcal{I}_4$.

\begin{remark}
    The physical interpretation of the local structure of the GRNN-based distributed optimal controller in \eqref{Eq:distributed_optimal_control_GRNN_local} is that each layer follows the form of the classical distributed controller, e.g., $u_i=\sum_{j\in\mathcal{N}_i\cup\{i\}} K_{ij}x_j$, for all $i\in\mathcal{I}_N$, capturing the states/features of the $i\tsup{th}$ subsystem and its neighboring subsystems $j\in\mathcal{N}_{i}$. Rather than assigning a local weight $\Theta_{3ij}$ for each neighboring information $X_{[j,:]}(t)$, for all $i\in\mathcal{I}_N$ and $j\in\mathcal{N}_i$, we simply learn a single $\Theta_{3i}$ at each subsystem so as to improve the scalability.
\end{remark}

To achieve online distributed training for our GRNN-based controllers \eqref{Eq:distributed_optimal_control_GRNN_local}, we make the following assumption.
\begin{assumption}\label{Asm:distributed_optimization_cost}
    The total cost $J$ in \eqref{Eq:centralized_cost} can be equivalently constructed by the summation of the decoupled sub-costs $J_i$ for all $i\in\mathcal{I}_N$, i.e., $J\equiv\sum_{i\in\mathcal{I}_N} J_i$.
\end{assumption}

Based on Asm. \ref{Asm:distributed_optimization_cost}, we can reformulate the original GRNN-based distributed optimal control problem in \eqref{Eq:distributed_optimal_control_original} as:
\begin{subequations}\label{Eq:distributed_optimal_control_local}
    \begin{align}
        \min_{\sigma,\ \{\Theta_{ki}: i\in\mathcal{I}_N, k\in\mathcal{I}_4\}}\ \ & \sum_{i\in\mathcal{I}_N} J_i\big(\{X_{[i,:]}(t)\}_{t=0}^{\infty},\ \{U_{[i,:]}(t)\}_{t=0}^{\infty}\big) \label{Eq:distributed_cost}   \\
        s.t.\ \ & \eqref{Eq:ith_subsystem},\ \eqref{Eq:distributed_optimal_control_GRNN_local},\ \forall i\in\mathcal{I}_N,\ j\in\mathcal{N}_i,      \\
        & \mbox{and } \forall t\in\{0,1,...\},    \nonumber
    \end{align}
\end{subequations}
where the $i\tsup{th}$ subsystem exchanges its input node states/features with its neighboring subsystems $j\in\mathcal{N}_i$ as in \eqref{Eq:ith_subsystem} and locally trains \eqref{Eq:distributed_optimal_control_GRNN_local}.

The distributed gradient update of the local GRNN model \eqref{Eq:distributed_optimal_controller_GRNN} follows the traditional backpropagation process, which is a message passing process. Therefore, for distributed training, we need to ensure that the messages passed are also distributed during training over all the subsystems and layers, i.e., the gradient of each weight in the corresponding subsystem only depends on local information.

To show the distributed backpropagation process, we find the relationship between the weights and the corresponding sub-cost $J_i$ by noting that only the control input $U_{[i,:]}(t)$ is related with all the weights $\Theta_{ki}$, for all $k\in\mathcal{I}_4$. Therefore, we have (the time dependencies are omitted here, but we use superscript when necessary):
\begin{subequations}\label{Eq:distributed_gradients}
    \begin{align}
        \frac{\partial J}{\partial \Theta_{1i}} &=\frac{\partial J_i}{\partial \Theta_{1i}} = \frac{\partial J_i}{\partial U_{[i,:]}}\frac{\partial U_{[i,:]}}{\partial Z_{[i,:]}}\frac{\partial Z_{[i,:]}}{\partial h_{[i,:]}}\frac{\partial h_{[i,:]}}{\partial \Theta_{1i}}     \nonumber  \\
        &=\Big(\frac{\partial J_i}{\partial U_{[i,:]}}\Theta_{4i}^\T\Big)\odot\sigma'(h_{[i,:]})Z_{[i,:]}^{(t-1)}
        \label{Eq:local_gradient_J_to_Theta1i}   \\
        \frac{\partial J}{\partial \Theta_{2i}} &=\frac{\partial J_i}{\partial \Theta_{2i}} = \frac{\partial J_i}{\partial U_{[i,:]}}\frac{\partial U_{[i,:]}}{\partial Z_{[i,:]}}\frac{\partial Z_{[i,:]}}{\partial h_{[i,:]}}\frac{\partial h_{[i,:]}}{\partial \Theta_{2i}}         \nonumber   \\
        &= \Big(\frac{\partial J_i}{\partial U_{[i,:]}}\Theta_{4i}^\T\Big)\odot\sigma'(h_{[i,:]})X_{[i,:]},
        \label{Eq:local_gradient_J_to_Theta2i}    \\
        \frac{\partial J}{\partial \Theta_{3i}} &= \frac{\partial J_i}{\partial \Theta_{3i}} = \frac{\partial J_i}{\partial U_{[i,:]}}\frac{\partial U_{[i,:]}}{\partial Z_{[i,:]}}\frac{\partial Z_{[i,:]}}{\partial h_{[i,:]}}\frac{\partial h_{[i,:]}}{\partial \Theta_{3i}}         \nonumber   \\
        &= \Big(\frac{\partial J_i}{\partial U_{[i,:]}}\Theta_{4i}^\T\Big)\odot\sigma'(h_{[i,:]})\sum_{j\in\mathcal{N}_i\cup\{i\}} S_{ij}X_{[j,:]},
        \label{Eq:local_gradient_J_to_Theta3i}    \\
        \frac{\partial J}{\partial \Theta_{4i}} &= \frac{\partial J_i}{\partial \Theta_{4i}} = \frac{\partial J_i}{\partial U_{[i,:]}}\frac{\partial U_{[i,:]}}{\partial \Theta_{4i}} = \frac{\partial J_i}{\partial U_{[i,:]}}Z_{[i,:]},
        \label{Eq:local_gradient_J_to_Theta4i}
    \end{align}
\end{subequations}
where $\odot$ is the Hadamard product, i.e., the element-wise product between vectors. In other words, for $A=[a_{ij}]\in\mathbb{R}^{m\times n}$ and $B=[b_{ij}]\in\mathbb{R}^{m\times n}$, $[A\odot B]_{ij}:=a_{ij}\cdot b_{ij}$, for all $i\in\mathcal{I}_m$ and $j\in\mathcal{I}_n$.

To online update the weights $\Theta_{ki}$, for all $i\in\mathcal{I}_N, k\in\mathcal{I}_4$ of the GRNN-based controller in \eqref{Eq:distributed_optimal_control_GRNN_local}, inspired by the idea of distributed optimization \cite{nedic2020distributed}, we propose a consensus-based weight updating approach.
In particular, we first decentrally update the weights for each subsystem, and then, we find the consensus solution by taking the neighboring information into account.
Thus, the update law for each weight at the $i\tsup{th}$ subsystem is:
\begin{subequations}\label{Eq:D-SGD_optimizer}
    \begin{align}
        \theta_i(t+1) &= \theta_i(t) - \eta_t\frac{1}{n_B}\sum_{b=1}^{n_B}\nabla J_{bi}(\theta_i(t)),      \\
        \theta_i(t+1) &= \sum_{j\in\mathcal{N}_i\cup\{i\}} W_{ij}\theta_j(t+1),
    \end{align}
\end{subequations}
where $\theta_i\in\{\Theta_{ki}$, for all $i\in\mathcal{I}_N, k\in\mathcal{I}_4\}$,
$\eta_t$ is the variable learning rate,
$n_B$ is the batch size in one round of computation, and the consensus matrices $W_{ij}$'s are selected as the Metropolis-Hasting weights:
\begin{equation}
    W_{ij}:=\begin{cases}
        \frac{1}{\max\{d_i, d_j\}},    & \mbox{for } j\in\mathcal{N}_i,   \\
        1-\sum_{k\in\mathcal{N}_i} W_{ik},  & \mbox{for } i = j,    \\
        0,    & \mbox{otherwise},
    \end{cases}
\end{equation}
where $d_i$ and $d_j$ represent the degree of node $i$ and $j$, respectively.

Based on the above results, the procedure of our proposed (consensus-based) online distributed training for the GRNN-based distributed optimal controller \eqref{Eq:distributed_optimal_control_GRNN_local} to solve the distributed optimal control problem in \eqref{Eq:distributed_optimal_control_local} can be summarized in Alg. \ref{alg:distributed_optimal_control_algorithm}.


\begin{algorithm}
\caption{Online Distributed Training for GRNN-Based Distributed Optimal Control \eqref{Eq:distributed_optimal_control_local}}
\label{alg:distributed_optimal_control_algorithm}
\begin{algorithmic}[1]
    \State \textbf{Input:} Initial states $\{X_{[i,:]}(0):i\in\mathcal{I}_N\}$, initial weights $\{\Theta_{ki}(0)$: $k\in\mathcal{I}_4, i\in\mathcal{I}_N\}$, initial hidden states $\{Z_{[i,:]}(-1)=\0:i\in\mathcal{I}_N\}$
    \State \textbf{Output:} Local weights $\{\Theta^*_{ki}: k\in\mathcal{I}_4, i\in\mathcal{I}_N\}$;
    \State Set up the topology $S$ and initial random weights $\Theta_{ki}$;
    \State Set up the D-SGD optimizer \eqref{Eq:D-SGD_optimizer};
    \For{each epoch in TotalEpochs}
        \State Update states $X_{[i,:]}(t+\Delta T)$ and control input $U_{[i,:]}(t+\Delta T)$ over a small time horizon $\Delta T$;
        \State Compute the loss \eqref{Eq:distributed_cost};
        \State Compute the local distributed gradients \eqref{Eq:distributed_gradients} for each node $i$ individually $i\in\mathcal{I}_N$;
        
        \State Update $\Theta_{ki}(t)$ by \eqref{Eq:D-SGD_optimizer} and the computed local gradients for each node;
        \State Compute the testing loss with the updated weights;
    \EndFor
    \State \Return Trained local weights $\Theta^*_{ki}$, training and testing losses.
\end{algorithmic}
\end{algorithm}

\subsection{Stability Analysis}\label{subsec:stability_analysis}

In this part, we analyze the stability of the closed-loop networked system \eqref{Eq:stacked_networked_system} with our GRNN-based distributed optimal controller \eqref{Eq:distributed_optimal_control_GRNN_local}.

First, note that the networked system $\Sigma$ in \eqref{Eq:stacked_networked_system} can be equivalently rewritten as:
\begin{equation}\label{Eq:stacked_networked_system_rewritten}
    \Sigma:\ x(t+1) = Ax(t)+Bu(t),
\end{equation}
where the states and control inputs are $x(t)\equiv\mbox{vec}(X^\T(t))\in\mathbb{R}^{nN}$ and $u(t)\equiv\mbox{vec}(U^\T(t))\in\mathbb{R}^{mN}$, respectively.

Based on \eqref{Eq:stacked_networked_system_rewritten} and \eqref{Eq:distributed_optimal_control_GRNN_local}, the closed-loop networked system with our GRNN-based controller can be written as: 
\begin{equation}\label{Eq:closed-loop_GRNN_feedback}
    \begin{cases}
        x(t+1) = Ax(t)+BK_4 z(t),    \\
        h(t) = K_1 z(t-1)+(K_2+K_3 S)x(t),  \\
        z(t) = \sigma(h(t)),
    \end{cases}
\end{equation}
where $K_{k}:=\mbox{diag}\{\Theta_{ki}^\T:k\in\mathcal{I}_4,\ i\in\mathcal{I}_N\}$ are the collection fo local trainable weights, and 
$h(t)\in\mathbb{R}^{pN}$ and $z(t)\equiv\mbox{vec}(Z^\T(t))\in\mathbb{R}^{pN}$ are the hidden states and internal states, respectively. The architecture of the closed-loop dynamics of \eqref{Eq:closed-loop_GRNN_feedback} is shown in Fig. \ref{Fig:closed-loop_architecture}.

\begin{figure}[!h]
    \centering
    \includegraphics[width=1.8in]{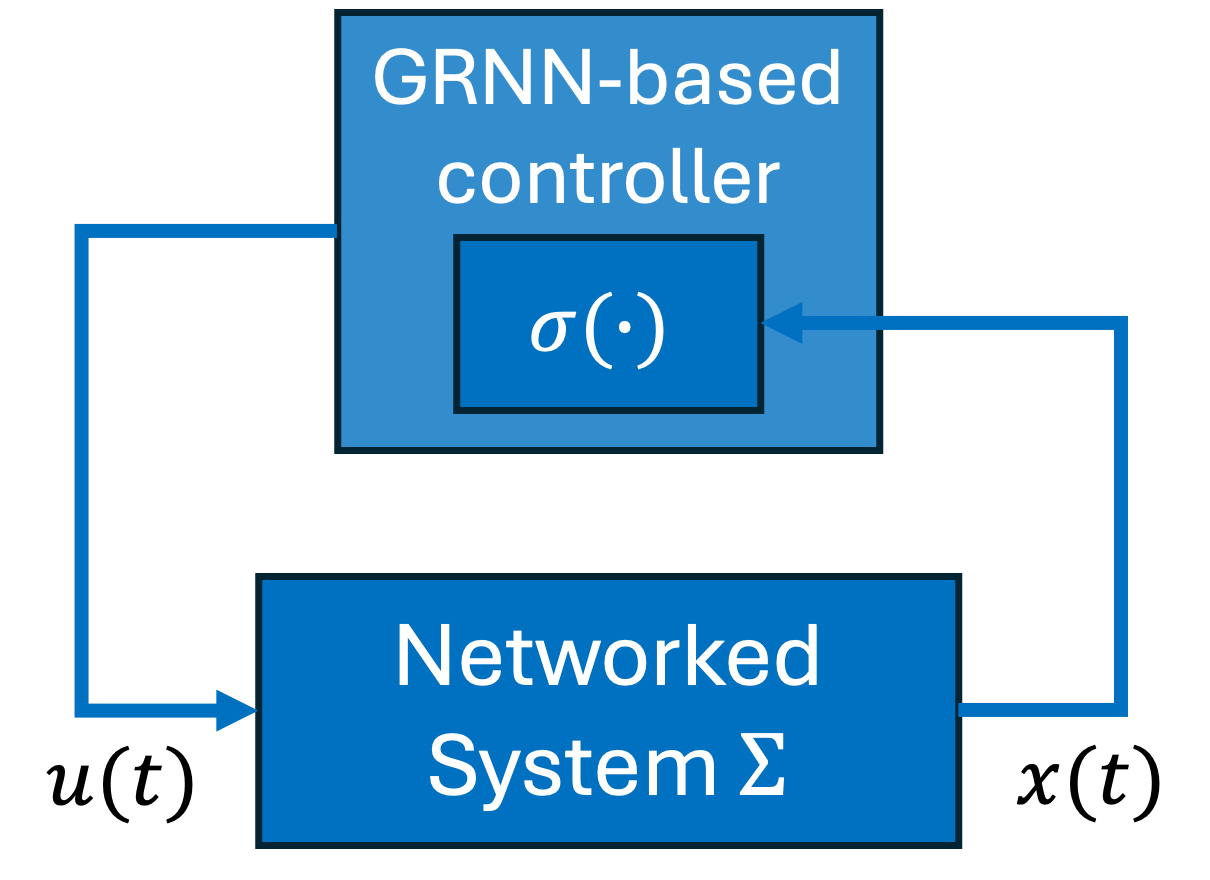}
    \caption{Architecture of the closed-loop networked system.}
    \label{Fig:closed-loop_architecture}
\end{figure}

In this way, the nonlinear activation function $\sigma(\cdot)$ can be separated, leading to a reformulation of the GRNN-based distributed optimal controller as
\begin{equation}
    \bmtx{u(t) \\ h(t)} = 
    \bmtx{\0 & K_4 & \0 \\ 
          (K_2+K_3 S) & \0 & K_1}
          \bmtx{x(t) \\ z(t) \\ z(t-1)},
\end{equation}
where $z(t) = \sigma(h(t))$.

Without loss of generality, it is assumed in our analysis below that the equilibrium point of the networked system \eqref{Eq:stacked_networked_system_rewritten} is $(x^*,\ u^*)=(\0,\ \0)$, and the corresponding hidden state and internal states of the GRNN-based controller at this point are $(h^*,\ z^*)=(\0,\ \0)$. 
Besides, for the hidden states $h$ on some interval $[\underline{h},\ \bar{h}]$ (element-wise with $\underline{h}\preceq \0\preceq \bar{h}$), the nonlinearity $\sigma$ of our GRNN-based controllers is locally sector-bounded in $[\underline{\alpha},\ \bar{\alpha}]$ and locally slope-restricted in $[\underline{\sigma},\ \bar{\sigma}]$ around the operating point $(h^*,\ z^*)=(\0,\ \0)$. 
For other cases when $(x^*,\ u^*,\ h^*,\ z^*)\neq(\0,\ \0,\ \0,\ \0)$, the analysis is easily generalized by considering the offset dynamics with the offset states $\tilde{x}:=x-x^*$, offset inputs $\tilde{u}:=u-u^*$, offset hidden states $\tilde{h}:=h-h^*$, and offset internal states $\tilde{z}:=z-z^*$.

Before we provide the proof of closed-loop stability, we need to obtain the bounds for the nonlinearity $\sigma$ in our proposed GRNN-based distributed optimal controller \eqref{Eq:closed-loop_GRNN_feedback} to facilitate the Lyapunov-based analysis.
With Def. \ref{Def:time_domain_IQC}, we can combine multiple IQCs, e.g., local sector boundedness and slope restriction (as seen in Def. \ref{def:local_sector-bounded_nonlinearity} and \ref{def:local_slope-restricted_nonlinearity}, respectively), for the sequences $h$ abd $z$, and we can readily construct the IQCs for both local sector and slope constraints as shown in the following lemma.
\begin{lemma}\label{Lem:sector_acausal_zames_falb_IQCs}
    Assume that a nonlinear element-wise function $\sigma:\mathbb{R}^{pN}\rightarrow\mathbb{R}^{pN}$ as in \eqref{Eq:closed-loop_GRNN_feedback} (element-wise on the interval $\underline{h}\preceq h\preceq \bar{h}$) is both locally sector-bounded in $[\underline{\alpha},\ \bar{\alpha}]$ (characterized by the IQC ($\Psi_{sec},\mathcal{Q}_{sec}$)) and locally slope-restricted in $[\underline{\sigma},\ \bar{\sigma}]$ around the point $(h^*,\ z^*)=(\0,\ \0)$ (characterized by the IQC ($\Psi_{slop},\ \mathcal{Q}_{slop}$)) element-wise.
    Then, the sequences $h$, $z\in\ell_2^{pN}$ satisfy the IQC defined by ($\Psi,\ \mathcal{Q}$) with 
    \begin{subequations}
        \begin{align}\label{Eq:Psi_IQC}
            \Psi:= \bmtx{\Psi_{sec} \\ \Psi_{slop}} =\scriptsize\bmtx{\begin{array}{c|c c} \0 & B^h_{\Psi_{slop}} & B^z_{\Psi_{slop}} \\  \hline 
            \bmtx{C_{\Psi_{sec}} \\ C_{\Psi_{slop}}} & \bmtx{D^h_{\Psi_{sec}} \\ D^h_{\Psi_{slop}}} & \bmtx{D^z_{\Psi_{sec}} \\ D^z_{\Psi_{slop}}}
            \end{array}},
        \end{align}
        \begin{align}\label{Eq:Q_IQC}
            \mathcal{Q}:=\scriptsize\bigg\{Q=\bmtx{Q_{sec} & \0 \\ \0 & Q_{slop}} : \begin{array}{c}
               Q_{sec}\in\mathcal{Q}_{sec}   \\
               Q_{slop}\in\mathcal{Q}_{slop} 
            \end{array}\bigg\},
        \end{align}
    \end{subequations}
    where $B^h_{\Psi_{slop}}=\scriptsize\bmtx{-\mbox{diag}(\bar{\sigma}) & \mbox{diag}(\underline{\sigma})}^\T$, $B^z_{\Psi_{slop}} = \scriptsize\bmtx{ \I & -\I}^\T$, $C_{\Psi_{sec}}=\scriptsize\bmtx{ \0 & \0}^\T$, $C_{\Psi_{slop}}=\scriptsize\bmtx{ \0 & \0 & \I}^\T$, $D^h_{\Psi_{sec}}=\scriptsize\bmtx{\mbox{diag}(\bar{\alpha}) & -\mbox{diag}(\underline{\alpha})}^\T$, $D^h_{\Psi_{slop}}=\scriptsize\bmtx{\mbox{diag}(\bar{\sigma}) \\ -\mbox{diag}(\underline{\sigma}) \\ \0}$, 
    $D^z_{\Psi_{sec}} = \scriptsize\bmtx{ -\I & \I }^\T$, and 
    $D^z_{\Psi_{slop}} = \scriptsize\bmtx{ -\I & \I & \0}^\T$. 
    Here, we select $Q_{sec} = Q_{off}(\mu, \mu)$ and  
    $Q_{slop}=\scriptsize\bmtx{Q_{off}(\eta_0, \eta_0) & Q_{off}(\bar{\eta}, \underline{\eta}) \\
    Q_{off}(\underline{\eta}, \bar{\eta}) & \0}$ 
    ($\eta_0\succeq\underline{\eta}+\bar{\eta}$) with $Q_{off}(a,b):=\scriptsize\bmtx{ \0 & \mbox{diag}(a) \\ \mbox{diag}(b) & \0}$, for $a$, $b\in\mathbb{R}^{pN}_+$.
\end{lemma}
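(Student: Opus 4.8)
The plan is to construct the combined IQC filter $(\Psi,\mathcal{Q})$ explicitly from the two given constituent IQCs and then verify, directly from Definition \ref{Def:time_domain_IQC}, that any pair $(h,z)$ satisfying the hypotheses also satisfies the combined one. First I would recall the standard single-nonlinearity IQC representations: the local sector bound $(\sigma(\nu)-\underline{\alpha}\nu)(\sigma(\nu)-\bar{\alpha}\nu)\le 0$ can be encoded by a static filter $\Psi_{sec}$ (no state, $A_{\Psi_{sec}}=\0$) whose output stacks $\mathrm{diag}(\bar\alpha)h-z$ and $-\mathrm{diag}(\underline\alpha)h+z$, paired with $Q_{sec}=Q_{off}(\mu,\mu)$, so that $q_{sec}^\T Q_{sec} q_{sec}$ reproduces (twice, summed over coordinates weighted by $\mu\succeq 0$) the sector product; and similarly the local slope restriction around $(\0,\0)$ — which, since $\sigma(\nu^*)=\sigma(0)=0$ here, reduces to the \emph{incremental} inequality on $(\Delta h,\Delta z)=(h,z)$ — is encoded by $\Psi_{slop}$ with the stated $B^h_{\Psi_{slop}},B^z_{\Psi_{slop}},C_{\Psi_{slop}},D^h_{\Psi_{slop}},D^z_{\Psi_{slop}}$ and the matrix $Q_{slop}$ of the stated block form. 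The key observation making the slope IQC \emph{acausal} (Zames–Falb type) is that a single copy of the delayed internal state $z(t-1)$ is carried in the filter state (hence the $\I,-\I$ in $B^z_{\Psi_{slop}}$ and the $\I$ in $C_{\Psi_{slop}}$), which is exactly why $\Psi_{slop}$ — unlike $\Psi_{sec}$ — has a nontrivial state equation.

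Next I would assemble $\Psi$ by vertical stacking, $\Psi=\bmtx{\Psi_{sec}\\ \Psi_{slop}}$: the combined state is just the state of $\Psi_{slop}$ (since $\Psi_{sec}$ is static), so $A_\Psi=\0$ in the displayed partitioned matrix is simply inherited, $B^h_\Psi=\bmtx{\0\\ B^h_{\Psi_{slop}}}$ and $B^z_\Psi=\bmtx{\0\\ B^z_{\Psi_{slop}}}$ collapse to $B^h_{\Psi_{slop}},B^z_{\Psi_{slop}}$, and the $C,D$ blocks stack as written; the output is the concatenation $q=(q_{sec},q_{slop})$. For $\mathcal{Q}$ I would take the block-diagonal set $\{\,\mathrm{diag}(Q_{sec},Q_{slop}):Q_{sec}\in\mathcal{Q}_{sec},\ Q_{slop}\in\mathcal{Q}_{slop}\,\}$ so that $q^\T Q q=q_{sec}^\T Q_{sec}q_{sec}+q_{slop}^\T Q_{slop}q_{slop}$. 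Then the finite-horizon inequality $\sum_{\tau=0}^t q^\T(\tau)Qq(\tau)\ge 0$ for every admissible $Q$ follows by adding the two separate nonnegative sums guaranteed by the hypotheses — this is the one place where the structure of $\mathcal{Q}$ as a (positive-semidefinite) cone and its closure under the block-diagonal combination is used, and it is routine once the constituent inequalities are in hand. I would also check that $\Psi:\ell_2^{pN}\times\ell_2^{pN}\to\ell_2^r$ is well defined with $\psi(0)=\0$, which is immediate since $A_\Psi=\0$ makes the filter a finite-memory (one-step) operator.

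The main obstacle — really the only non-bookkeeping step — is verifying that the specific $Q_{slop}$ of the form $\bmtx{Q_{off}(\eta_0,\eta_0) & Q_{off}(\bar\eta,\underline\eta)\\ Q_{off}(\underline\eta,\bar\eta) & \0}$ with the side condition $\eta_0\succeq\underline\eta+\bar\eta$ actually produces a \emph{valid} (nonnegative over all horizons) acausal slope IQC, i.e. that the cross terms coupling $z(t)$ and $z(t-1)$ telescope correctly under the finite sum. Concretely, one must expand $q_{slop}^\T(\tau)Q_{slop}q_{slop}(\tau)$ using the definitions of $C_{\Psi_{slop}},D^h_{\Psi_{slop}},D^z_{\Psi_{slop}}$, recognize the diagonal-in-$\tau$ part as a nonnegative combination of the per-coordinate slope products $(\Delta z_k-\underline\sigma_k\Delta h_k)(\Delta z_k-\bar\sigma_k\Delta h_k)$ — which are $\le 0$ coordinatewise, hence enter with the appropriate sign — and handle the $z(t)$–$z(t-1)$ cross terms by a summation-by-parts / telescoping argument, where $\eta_0\succeq\underline\eta+\bar\eta$ is exactly the condition that kills the residual indefinite boundary contribution (or makes the leftover $\psi(t+1)$-dependent term a perfect square / nonnegative). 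I expect this verification to mirror the Zames–Falb multiplier argument in \cite{wu2022stability,yin2021stability}, so in the write-up I would state the per-coordinate expansion, invoke the coordinatewise sector and slope inequalities, and reduce the cross-term bookkeeping to the cited results rather than reproducing the full telescoping by hand.
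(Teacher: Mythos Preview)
Your proposal is correct and follows essentially the same route as the paper: both arguments substitute the stated $(\Psi,\mathcal{Q})$ into Definition~\ref{Def:time_domain_IQC}, compute the filter output $q(t)$ explicitly (splitting it into the static sector block and the one-step-delay slope block), and then verify $\sum_{\tau=0}^{t} q^\T(\tau)Qq(\tau)\ge 0$ by recognizing the sector product in the $Q_{sec}$ term and invoking the condition $\eta_0\succeq\underline\eta+\bar\eta$ to control the $z(t)$--$z(t-1)$ cross terms in the $Q_{slop}$ term. The paper's write-up is terser---it simply asserts ``it is readily shown that both terms \ldots are positive semi-definite'' where you outline the telescoping/Zames--Falb bookkeeping---but the underlying computation is identical.
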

\begin{proof}
    For any $h\in\mathbb{R}^{pN}$ and $z=\sigma(h)$, substituting the $\Psi$ and $\mathcal{Q}$ in \eqref{Eq:Psi_IQC} and \eqref{Eq:Q_IQC}, respectively, into the time domain IQC defined in Def. \ref{Def:time_domain_IQC}, we have:

    \begin{align}
        & \psi(t+1) = \bmtx{-\mbox{diag}(\bar{\sigma}) \\ \mbox{diag}(\underline{\sigma})} h(t)+ \bmtx{\I \\ -\I} z(t),     \label{Eq:Psi_t+1}      \\
        & q(t) = \scriptsize\bmtx{\0 \\ \0 \\ \hline \0 \\ \0 \\ \I_{2pN\times pN}} \psi(t)+ \bmtx{\mbox{diag}(\bar{\alpha}) \\ -\mbox{diag}(\underline{\alpha}) \\ \hline \mbox{diag}(\bar{\sigma}) \\ -\mbox{diag}(\underline{\sigma}) \\ \0_{2pN\times pN}} h(t)+ \bmtx{ -\I \\ \I \\ \hline -\I \\ \I \\ \0_{2pN\times pN}} z(t)    \label{Eq:qt_compute} 
    \end{align}    

    Substituting \eqref{Eq:Psi_t+1} into \eqref{Eq:qt_compute}, we have:
    \begin{align}\label{Eq:constraint_output_q}
        q(t) =& \scriptsize\bmtx{\0 \\ \0 \\ \hline \0 \\ \0 \\ -\mbox{diag}(\bar{\sigma}) \\ \mbox{diag}(\underline{\sigma})}  h(t-1)+ \bmtx{\0 \\ \0 \\ \hline \0 \\ \0 \\ \I \\ -\I}z(t-1)+    \nonumber   \\
        &\scriptsize\bmtx{\mbox{diag}(\bar{\alpha}) \\ -\mbox{diag}(\underline{\alpha}) \\ \hline \mbox{diag}(\bar{\sigma}) \\ -\mbox{diag}(\underline{\sigma}) \\ \0} h(t)+ \bmtx{ -\I \\ \I \\ \hline -\I \\ \I \\ \0} z(t) \nonumber    \\
        =&\scriptsize\bmtx{\mbox{diag}(\bar{\alpha})h(t)-z(t) \\ -\mbox{diag}(\underline{\alpha})h(t)+z(t) \\ \hline \mbox{diag}(\bar{\sigma})h(t)-z(t) \\ -\mbox{diag}(\underline{\sigma})h(t)+z(t) \\ -\mbox{diag}(\bar{\sigma})h(t-1)+z(t-1) \\ \mbox{diag}(\underline{\sigma})h(t-1)-z(t-1)}
    \end{align}

    Then, we have:
    \begin{align}
        & \scriptsize \sum_{\tau=0}^{t} q^\T(\tau)\bmtx{\begin{array}{c|c c} Q_{off}(\mu,\mu) & \0 & \0 \\ \hline \0 & Q_{off}(\eta_0,\eta_0) & Q_{off}(\bar{\eta},\underline{\eta}) \\
        \0 & Q_{off}(\underline{\eta},\bar{\eta}) & \0
        \end{array}} q(\tau)     \nonumber   \\
        =& \scriptsize\sum_{\tau=0}^{t} \bmtx{\mbox{diag}(\bar{\alpha})h(\tau)-z(\tau) \\ -\mbox{diag}(\underline{\alpha})h(\tau)+z(\tau)}^\T Q_{off}(\mu,\mu)\bmtx{\mbox{diag}(\bar{\alpha})h(\tau)-z(\tau) \\ -\mbox{diag}(\underline{\alpha})h(\tau)+z(\tau)}+    \nonumber  \\
        & \scriptsize\bmtx{\mbox{diag}(\bar{\sigma})h(\tau)-z(\tau) \\ -\mbox{diag}(\underline{\sigma})h(\tau)+z(\tau) \\ -\mbox{diag}(\bar{\sigma})h(\tau-1)+z(\tau-1) \\ \mbox{diag}(\underline{\sigma})h(\tau-1)-z(\tau-1)}^\T\scriptsize\bmtx{Q_{off}(\eta_0, \eta_0) & Q_{off}(\bar{\eta}, \underline{\eta}) \\
        Q_{off}(\underline{\eta}, \bar{\eta}) & \0}*  \nonumber  \\
        & \scriptsize\bmtx{\mbox{diag}(\bar{\sigma})h(\tau)-z(\tau) \\ -\mbox{diag}(\underline{\sigma})h(\tau)+z(\tau) \\ -\mbox{diag}(\bar{\sigma})h(\tau-1)+z(\tau-1) \\ \mbox{diag}(\underline{\sigma})h(\tau-1)-z(\tau-1)} \geq 0  \label{Eq:qt_T_Q_qt_compute}   
    \end{align}    

    Since $\mu\in\mathbb{R}^{pN}_+$, $\eta_0$, $\underline{\eta}$, $\bar{\eta}\in\mathbb{R}^{2pN}_+$ and $\eta_0\succeq\underline{\eta}+\bar{\eta}$, it is readily shown that both terms in \eqref{Eq:qt_T_Q_qt_compute} are positive semi-definite, which indicates the element-wise satisfaction of both local sector-boundedness in $[\underline{\alpha},\ \bar{\alpha}]$ and local slope-restriction in $[\underline{\sigma},\ \bar{\sigma}]$ around the point $(h^*,\ z^*)=(\0,\ \0)$ for the nonlinear activation function $\sigma(h)$ on the interval $\underline{h}\preceq h\preceq \bar{h}$.
\end{proof}

\begin{remark}
    The results in Lem. \ref{Lem:sector_acausal_zames_falb_IQCs} seems to be similar as those in [Lems. 5 and 6 \cite{wu2022stability}]. 
    However, it is worth noting that the multi-layer RNN-based feedback controller in \cite{wu2022stability} is only to control a single LTI system, and the results in Lem. \ref{Lem:sector_acausal_zames_falb_IQCs} are for networked systems and the GRNN-based controllers in \eqref{Eq:distributed_optimal_control_GRNN_local} are distributed with only one layer.
\end{remark}

Our stability analysis will rely on the element-wise satisfaction of local sector boundedness and local slope-restriction of the the nonlinear activation function $\sigma(\cdot)$ in \eqref{Eq:closed-loop_GRNN_feedback}, and we summarize these constraints in the following property based on the results in Lem. \ref{Lem:sector_acausal_zames_falb_IQCs}.


\begin{property}\label{proper:local_sector_slope_constraints}
    Consider the operating point $(h^*,\ z^*)=(\0,\ \0)$ of the nonlinear activation function $\sigma$ in \eqref{Eq:closed-loop_GRNN_feedback}. Suppose that the input bound of $\sigma$ is obtained as $\underline{h}\preceq h \preceq\bar{h}$ for every element in $h$.
    There exist $\underline{\alpha}$, $\bar{\alpha}$, $\underline{\sigma}$ and $\bar{\sigma}$ such that the nonlinear activation function $\sigma$ in \eqref{Eq:closed-loop_GRNN_feedback} is both local sector-bounded on the interval $[\underline{\alpha},\bar{\alpha}]$ and local slope-restricted around the origin in $[\underline{\sigma},\bar{\sigma}]$ are guaranteed.
\end{property}

\begin{remark}
    In order to apply the local sector and slope constraints on the nonlinear activation function $\sigma$ in \eqref{Eq:closed-loop_GRNN_feedback} in the stability analysis, we must first compute the bounds $\underline{h}$, $\bar{h}\in\mathbb{R}^{pN}$ for the activation function input $h\in\mathbb{R}^{pN}$. The process to compute the bounds is out of the scope of this paper but more details can be found in \cite{gowal2018effectiveness}.
\end{remark}

Define a stacked vector $\xi(t):=[x^\T(t),\ h^\T(t-1),\ z^\T(t-1)]^\T\in\mathbb{R}^{nN+2pN}$.
Note that the closed-loop networked system under our GRNN-based distributed optimal controller \eqref{Eq:closed-loop_GRNN_feedback} with the constrained output $q(t)$ (as in \eqref{Eq:constraint_output_q}) of the nonlinear activation function $\sigma$ satisfying the IQC defined by ($\Psi,\ \mathcal{Q}$) (as in \eqref{Eq:Psi_IQC} and \eqref{Eq:Q_IQC}) can be constructed as an augmented system as:
    \begin{subequations}\label{Eq:augmented_closed-loop_system}
        \begin{align}
            \xi(t+1) &= A_{\Xi}\xi(t)+B_{\Xi}\bmtx{h^\T(t) & z^\T(t)}^\T,    \\
            q(t) &= C_{\Xi}\xi(t)+D_{\Xi}\bmtx{h^\T(t) & z^\T(t)}^\T,
        \end{align}
    \end{subequations}
where $A_{\Xi}:=\scriptsize\bmtx{A & \0 & \0 \\ (K_2+K_3 S) & \0 & K_1 \\ \0 & \0 & \0}$, $B_{\Xi}:=\scriptsize\bmtx{\0 & BK_4 \\ \0 & \0 \\ \0 & \I}$, $C_{\Xi}:=\scriptsize\bmtx{\begin{array}{c|c c}\0_{4pN\times nN} & \0_{4pN\times pN} & \0_{4pN\times pN} \\ \hline \0 & -\mbox{diag}(\bar{\sigma}) & \I \\ \0 & \mbox{diag}(\underline{\sigma}) & -\I \end{array}}$, and 
$D_{\Xi}:=\scriptsize\bmtx{\mbox{diag}(\bar{\alpha}) & -\I \\ -\mbox{diag}(\underline{\alpha}) & \I \\ \hline \mbox{diag}(\bar{\sigma}) & -\I \\ -\mbox{diag}(\underline{\sigma}) & \I \\ \0_{2pN\times pN} & \0_{2pN\times pN}}$.

Now, using Property \ref{proper:local_sector_slope_constraints}, we show the closed-loop stability for the networked system \eqref{Eq:augmented_closed-loop_system} (which is the equivalent combination of the networked system \eqref{Eq:closed-loop_GRNN_feedback} and the constrained output of the nonlinearity $\sigma$) in the following theorem.
\begin{theorem}
    Consider the closed-loop networked system under the GRNN-based distributed optimal controller as in \eqref{Eq:augmented_closed-loop_system},
    where the nonlinear activation function $\sigma$ satisfies the Property \ref{proper:local_sector_slope_constraints}.
    If there exist $P>0$ and $\epsilon>0$ such that the condition
    \begin{align}
        \scriptsize\bmtx{
        A_{\Xi}^\T PA_{\Xi}-P+C_{\Xi}^\T QC_{\Xi} & A_{\Xi}^\T PB_{\Xi}+C_{\Xi}^\T QD_{\Xi} \\ 
        B_{\Xi}^\T PA_{\Xi}+D_{\Xi}^\T QC_{\Xi}   & B_{\Xi}^\T PB_{\Xi}+D_{\Xi}^\T QD_{\Xi}}\leq \bmtx{-\epsilon\I_{nN\times nN} & \0 \\
        \0 & \0},   \label{Eq:main_stability_condition}
    \end{align}
    holds, then the closed-loop networked system \eqref{Eq:augmented_closed-loop_system} is local asymptotically stable.  
\end{theorem}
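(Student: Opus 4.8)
The plan is to read \eqref{Eq:main_stability_condition} as a dissipation LMI and combine it with the IQC of Lem.~\ref{Lem:sector_acausal_zames_falb_IQCs} through an $S$-procedure argument built on the quadratic storage function $V(\xi):=\xi^\T P\xi$, which is positive definite since $P>0$. Writing $w(t):=[h^\T(t)\ z^\T(t)]^\T$, the augmented system \eqref{Eq:augmented_closed-loop_system} reads $\xi(t+1)=A_\Xi\xi(t)+B_\Xi w(t)$ and $q(t)=C_\Xi\xi(t)+D_\Xi w(t)$; left- and right-multiplying \eqref{Eq:main_stability_condition} by $[\xi^\T(t)\ w^\T(t)]$ and its transpose and expanding the blocks gives, at every $t$ for which $h(t)\in[\underline h,\bar h]$ (so that $q(t)$ is the constrained output \eqref{Eq:constraint_output_q} of $\sigma$),
\begin{equation*}
V(\xi(t+1))-V(\xi(t))+q^\T(t)Qq(t)\le -\epsilon\, x^\T(t)x(t).
\end{equation*}
No separate multiplier state is required, because the one-step-delay dynamics of $\Psi$ in Lem.~\ref{Lem:sector_acausal_zames_falb_IQCs} are already embedded in $\xi(t)$, which carries $h(t-1)$ and $z(t-1)$.

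Summing this over $\tau=0,\dots,t$ and telescoping the storage terms yields
\begin{equation*}
V(\xi(t+1))-V(\xi(0))+\sum_{\tau=0}^{t}q^\T(\tau)Qq(\tau)\le -\epsilon\sum_{\tau=0}^{t}x^\T(\tau)x(\tau).
\end{equation*}
By Lem.~\ref{Lem:sector_acausal_zames_falb_IQCs} the IQC sum is nonnegative and $V(\xi(t+1))\ge 0$, so $\epsilon\sum_{\tau=0}^{t}|x(\tau)|^2\le V(\xi(0))$ for all $t$, whence $x\in\ell_2^{nN}$ and $x(t)\to\0$. To confine the trajectory — which is what legitimizes the hypothesis $h(t)\in[\underline h,\bar h]$ — set $W(t):=V(\xi(t))+\sum_{\tau=0}^{t-1}q^\T(\tau)Qq(\tau)$; the one-step inequality gives $W(t+1)\le W(t)$, and since the IQC partial sums are nonnegative, $V(\xi(t))\le W(t)\le W(0)=V(\xi(0))$ for all $t$, so every sublevel set $\{\xi:V(\xi)\le c\}$ is forward invariant.

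It then remains to (i) fix $c>0$ small enough that on $\{V\le c\}$ the activation input $h(t)=K_1\sigma(h(t-1))+(K_2+K_3S)x(t)$ — a continuous function of $\xi(t)$ vanishing at $\xi=\0$ — stays inside $(\underline h,\bar h)$, so that by invariance Property~\ref{proper:local_sector_slope_constraints} holds for all $t$ whenever $V(\xi(0))\le c$, retroactively justifying the above and giving stability of the origin; and (ii) upgrade $x(t)\to\0$ to $\xi(t)\to\0$: local sector-boundedness gives $|z(t)|=|\sigma(h(t))|\le\gamma|h(t)|$ with $\gamma:=\max\{|\underline\alpha|,|\bar\alpha|\}$ (using $\sigma(\0)=\0$), hence $|z(t)|\le\gamma\|K_1\|\,|z(t-1)|+\gamma\|K_2+K_3S\|\,|x(t)|$, a recursion driven to zero by $x(t)\to\0$, after which $h(t)\to\0$ and $\xi(t)\to\0$. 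The step I expect to be the main obstacle is the contraction in (ii): it needs $\gamma\|K_1\|<1$ (equivalently, internal stability of the controller's recurrent map), which is not assumed but should follow from feasibility of \eqref{Eq:main_stability_condition} with $P>0$ (the slope-restriction IQC together with the LMI certifies the internal $z$-loop); extracting that small-gain bound cleanly from the LMI is the delicate part, whereas the $x$-channel dissipation estimate is routine. A secondary subtlety is that $q^\T(t)Qq(t)$ is sign-indefinite pointwise (the $Q_{off}$ blocks are not positive semidefinite), so monotonicity must be carried by $W(t)$ rather than by $V$ alone, as done above.
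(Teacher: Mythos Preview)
Your core approach matches the paper's: quadratic storage $V(\xi)=\xi^\T P\xi$, left/right multiply \eqref{Eq:main_stability_condition} by $[\xi^\T\ h^\T\ z^\T]$ and its transpose to obtain $V(t+1)-V(t)+q^\T(t)Qq(t)\le-\epsilon|x(t)|^2$, then exploit nonnegativity of the IQC term. The paper diverges from your plan in two places, both shortening the argument. First, it treats $q^\T(t)Qq(t)\ge0$ as a \emph{pointwise} (hard) inequality rather than only in partial sums: the sector piece is pointwise by definition, and the one-step slope piece is pointwise because slope restriction applied between $h(t-1)$ and $h(t)$ is itself a per-step constraint (this is what the proof of Lem.~\ref{Lem:sector_acausal_zames_falb_IQCs} actually establishes, term by term). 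Your worry that the $Q_{off}$ blocks are not PSD conflates $Q$ as an abstract matrix with the quadratic form $q^\T Qq$ evaluated on the structured output \eqref{Eq:constraint_output_q}; on those $q$'s the form is nonnegative, so $V$ itself is nonincreasing and your auxiliary $W(t)$ and telescoped sums are unnecessary. Second, from $\Delta V\le-\epsilon|x|^2$ the paper simply invokes LaSalle to conclude $x(t)\to\0$, takes the sublevel ellipsoid $\mathcal{E}(P_x)=\{x:x^\T P_x x\le1\}$ (with $P_x$ the upper-left block of $P$) as an ROA approximation, and stops; it does not separately argue $\xi(t)\to\0$, so your step~(ii) with its contraction requirement $\gamma\|K_1\|<1$ is extra bookkeeping the paper omits. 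In short: same skeleton, but the paper runs the hard-IQC-plus-LaSalle route, while you take the more cautious soft-IQC path and then have to recover the controller's internal stability by hand.
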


\begin{proof}
    Select the Lyapunov function candidate $V(\xi):=\xi^\T(t) P\xi(t)$. For \eqref{Eq:main_stability_condition}, if we left/right multiply the vector $v:=[\xi^\T\ h^\T\ z^\T]$ and its transpose, then we have:
    \begin{align}
        & v^\T\bmtx{
        A_{\Xi}^\T PA_{\Xi}-P+C_{\Xi}^\T QC_{\Xi} & A_{\Xi}^\T PB_{\Xi}+C_{\Xi}^\T QD_{\Xi} \\ 
        B_{\Xi}^\T PA_{\Xi}+D_{\Xi}^\T QC_{\Xi}   & B_{\Xi}^\T PB_{\Xi}+D_{\Xi}^\T QD_{\Xi}}v\leq  \nonumber \\ 
        & v^\T\bmtx{-\epsilon\I_{nN\times nN} & \0 \\
        \0 & \0}v,  
    \end{align}
    which indicates that 
    \begin{align}
        & V(t+1)-V(t)+q^\T(t) Qq(t)    \nonumber   \\
        =&\xi^\T(t+1) P\xi(t+1)-\xi^\T(t) P\xi(t)+q^\T(t) Qq(t)      \nonumber   \\
        =& \scriptsize\bmtx{\xi \\ \hline h \\ z}^\T\bmtx{\begin{array}{c|c}
        A_{\Xi}^\T PA_{\Xi}-P+C_{\Xi}^\T QC_{\Xi} & A_{\Xi}^\T PB_{\Xi}+C_{\Xi}^\T QD_{\Xi} \\  \hline
        B_{\Xi}^\T PA_{\Xi}+D_{\Xi}^\T QC_{\Xi}   & B_{\Xi}^\T PB_{\Xi}+D_{\Xi}^\T QD_{\Xi}
        \end{array}}\bmtx{\xi \\ \hline h \\ z}   \nonumber \\
        \leq& -\epsilon|x|^2.
    \end{align}

    Therefore, if the condition \eqref{Eq:main_stability_condition} holds, then we have:
    \begin{equation}
        V(t+1)-V(t)+\underbrace{q^\T(t) Qq(t)}_{\geq 0}\leq -\epsilon|x|^2,
    \end{equation}
    which indicates that for any initial state around the origin, e.g., $x(0)\in\mathcal{E}(P_x)\triangleq\{x^\T P_x x\leq 1: P_x>0\}$, where the ellipsoid $\mathcal{E}(P_x)$ is an approximation of the region of attraction for the state $x$, and $P_x$ is the upper left block of the block matrix $P\triangleq\scriptsize\bmtx{P_x & \0 \\ \0 & *}$, the system state $x(t)\rightarrow \0$ as $t\rightarrow\infty$, based on LaSalle’s Invariance principle \cite{mei2017lasalle}.
    This completes the proof.

\end{proof}

\section{Simulation Example}\label{sec:simulation}

In this section, we provide a simulation example to verify the effectiveness of our proposed methods.
We consider a classical distributed LQR control problem as follows:
\begin{subequations}\label{Eq:centralized_LQR_problem}
    \begin{align}
        \min_{u}\ \ & \mathbb{E}_{x(0)\sim\mathcal{N}(\0,\I)}\Big(\sum_{t=0}^{\infty}\ x^\T(t)P_tx(t)+u^\T(t)R_t u(t)+   \nonumber  \\
        & \ \ \ \ \ \ \ \ \ \ \ \ \ \ \ \ \ \ \ \ \ \ \ \
        \ \ \ \ \ \ \ \ \ \ \ \ \ x^\T(T)P_Tx(T)\Big)  \\
        s.t.\ & x(t+1) = Ax(t)+Bu(t),\ t=0,1,...  \\
        & u(t)=\phi(x(t);G),\ \mbox{for }\  t=0,1,...
    \end{align}
\end{subequations}
We use the GNN framework to solve it by setting $X(t) = x(t)$ and $U(t) = u(t)$. Here, the parameters are selected as $P_t=R_t=P_T=\mathbf{I}$. Therefore, by using our proposed GRNN-based control method with online distributed training, we can reformulate the problem in \eqref{Eq:centralized_LQR_problem} as
\begin{subequations}\label{Eq:distributed_LQR_problem}
    \begin{align}
        \min_{u}\ \ & \mathbb{E}_{x(0)\sim\mathcal{N}(\0,\I)}\Big(\sum_{i=1}^N\sum_{t=0}^{\infty}\ x_i^\T(t)P_{ti}x_{i}(t)+   \nonumber  \\
        & \ \ \ \ \ \ \ \ \ \ \ u_i^\T(t)R_{ti} u_i(t)+x_i^\T(T)P_{Ti}x_i(T)\Big)  \\
        s.t.\ & \eqref{Eq:ith_subsystem},\ \eqref{Eq:distributed_optimal_control_GRNN_local},\ \forall i\in\mathcal{I}_N,\ j\in\mathcal{N}_i, \mbox{and } \forall t\in\{0,1,...\},
    \end{align}
\end{subequations}

In this way, our proposed results are verified by a specifically developed simulator\footnote{Publicly available at \href{https://github.com/NDzsong2/fdTrainGRNN\textunderscore NetCtrl-main.git}{https://github.com/NDzsong2/fdTrainGRNN\textunderscore NetCtrl-main.git}}, which is constructed in Python and TensorFlow environment. Our GRNN-based controller training is conducted in a self-supervised manner.
We assume that the networked system involves $10$ subsystems, and each is with $2$ dimensions. 
The topology $S$ of the network is given by randomly generating the graph based on the Gaussian Random Partition Model (GRPM).
The network dynamics is randomly created using NumPy, and each value is sampled from $\mathcal{N}(0,1)$ (standard normal distribution), where the sparsity of the $A$ and $B$ follow the same pattern as the topology $S$.
Besides, we ensure the controllability of the pair ($A,B$). Similar to \cite{gama2022distributed}, we normalized the $A$ and $B$ matrices and then scaled with some value $0.995$.
A random noise $n\sim\mathcal{N}(\0,\ 0.1\I)$ is also added to the networked system dynamics.
We select the total training and testing epochs as $21$. The batch size and the test samples are selected as $100$ and $20$, respectively. The dimension of the hidden states $z_{i}$ and the internal states $h_i$ are both $\mathbb{R}^{2\times 2}$, and the nonlinear activation function $\sigma$ is selected as $\tanh$.
Initially, we randomly select $x(0)\sim\mathcal{N}(2\I,\ \I)$, $z(-1)=\0$, and $\Theta_{ki}\sim U(\0,\ \I)$ (uniform distribution), for all $k\in\mathcal{I}_4$, and $i\in\mathcal{I}_N$.
The activation function $\sigma$ in our proposed GRNN-based controller \eqref{Eq:distributed_optimal_control_GRNN_local} is selected as $\tanh$ function.

\begin{figure}[!t]
    \vspace{2mm}
    \centering
    \begin{subfigure}{0.5\textwidth}
        \includegraphics[width=\linewidth]{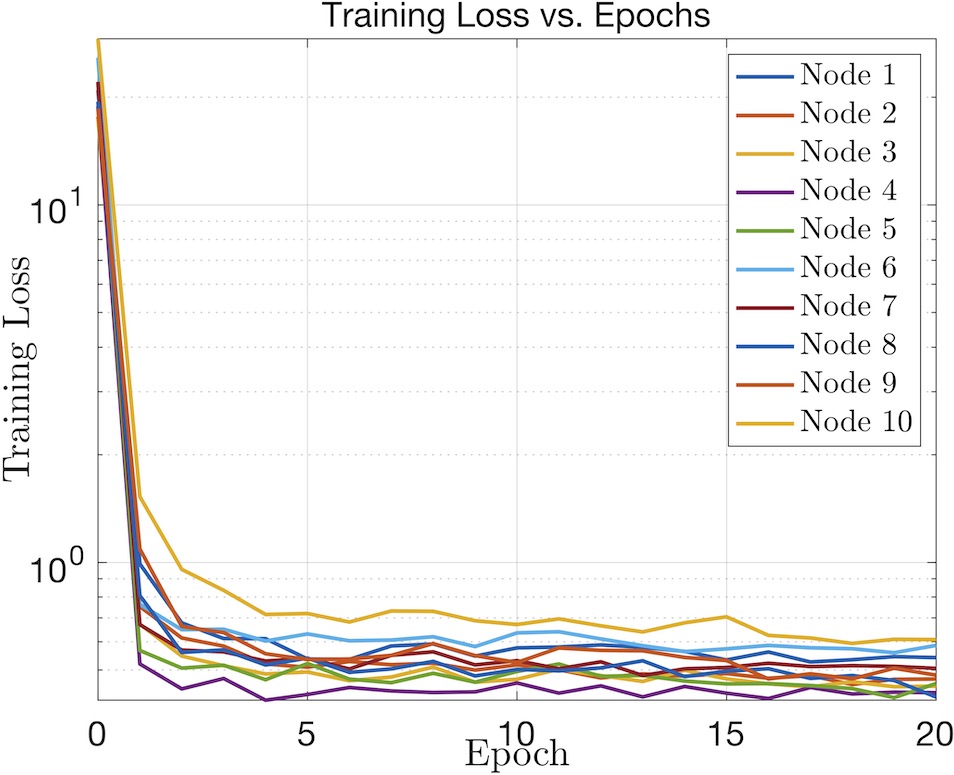}
        \vspace{-3.5mm}
        \caption{Training loss of our proposed GRNN-based controller}
        \label{Fig:training_loss_our_grnn}
    \end{subfigure}
    \hfill
    \begin{subfigure}{0.5\textwidth}
        \includegraphics[width=\linewidth]{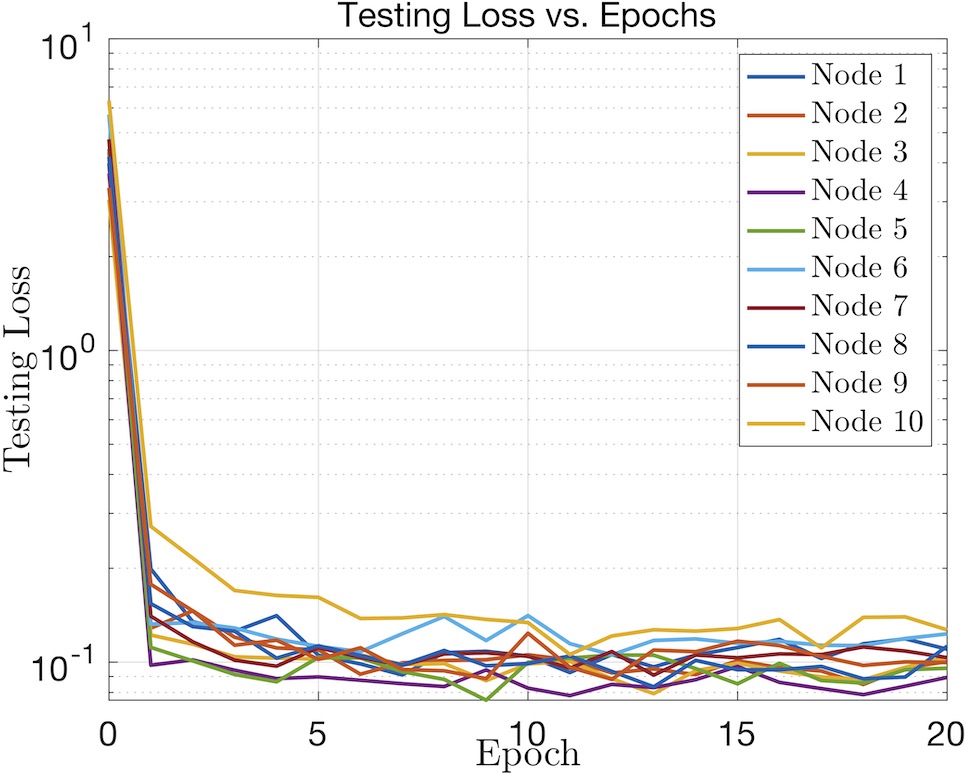}
        \vspace{-3.5mm}
        \caption{Testing loss of our proposed GRNN-based controller}
        \label{Fig:testing_loss_our_grnn}
    \end{subfigure}
    \caption{Results observed under our proposed GRNN-based controller: (a) Training loss; (b) testing loss.}
    \label{Fig:training_testing_results_our_grnn_method}
\end{figure}

\begin{figure}[!t]
    \vspace{2mm}
    \centering
    \begin{subfigure}{0.5\textwidth}
        \includegraphics[width=\linewidth]{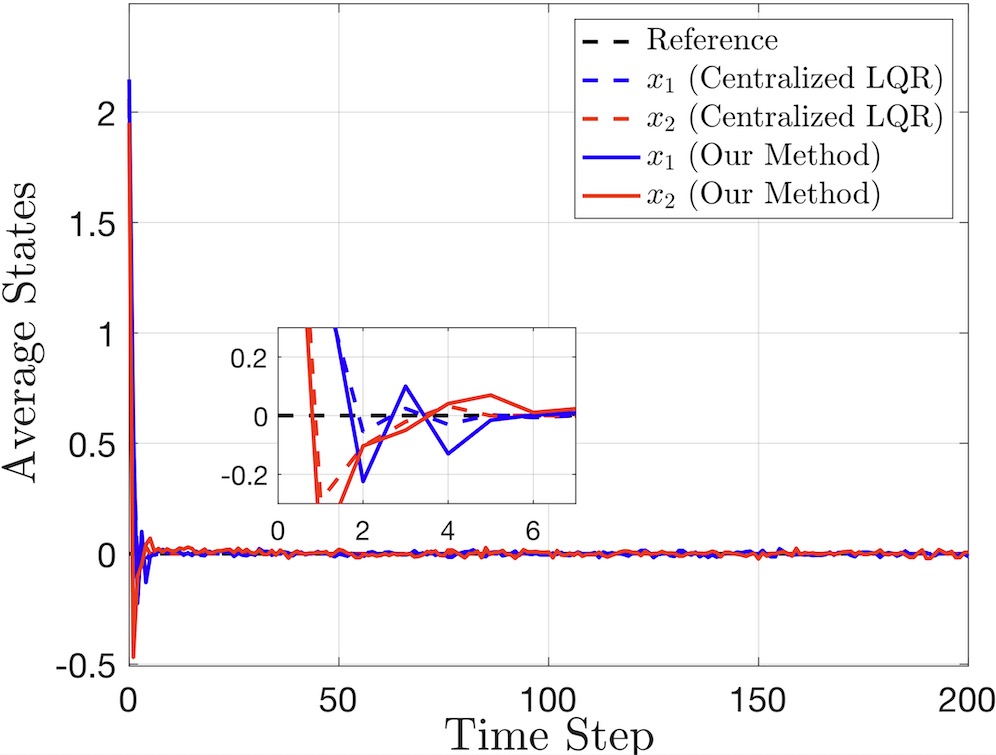}
        \vspace{-3.5mm}
        \caption{Average states in real time}
        \label{Fig:training_loss_our_grnn}
    \end{subfigure}
    \hfill
    \begin{subfigure}{0.5\textwidth}
        \includegraphics[width=\linewidth]{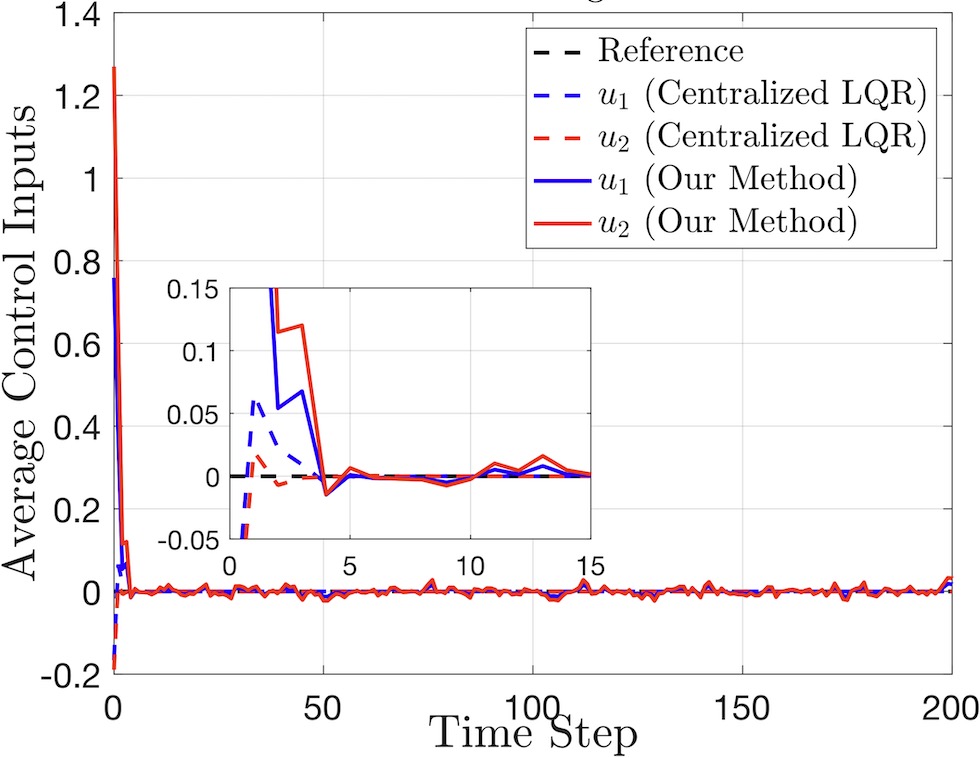}
        \vspace{-3.5mm}
        \caption{Average control inputs in real time}
        \label{Fig:testing_loss_our_grnn}
    \end{subfigure}
    \caption{Comparison between our proposed method and centralized LQR (global optimal) solutions.}
    \label{Fig:average_states_controls_comparisons}
\end{figure}

The training and testing results of our proposed GRNN-based controller are shown in Fig. \ref{Fig:training_loss_our_grnn} and \ref{Fig:testing_loss_our_grnn}, respectively, using the semilogy plots. It is observed that both training and testing losses for all subsystems can rapidly decrease from the initial guess and stay around $0.5$ and $0.1$, respectively. The reason for the deviation from $0$ is due to the noise ($n\sim\mathcal{N}(\0,\ 0.1\I)$) added in the network dynamics.

To illustrate the capability of online training and testing, we also count the time of the entire training and testing process in our simulation.
In particular, the total running time of our training and testing process is $4.10$s, and the average training and testing time for each node in each epoch is only $0.02$s.
Besides, as our network size $N$ grows, our computational complexity is $\mathcal{O}(N)$ since each individual controller only contains $4$ trainable weights as seen in \eqref{Eq:distributed_optimal_control_GRNN_local}.
Therefore, this indicates that our proposed GRNN-based controller can be trained and tested online for large-scale networked systems.

We further illustrate the performance of our proposed method by comparing to the classical centralized LQR solutions as shown in Fig. \ref{Fig:average_states_controls_comparisons}. 
It is observed that our proposed GRNN method can eventually track the global optimal LQR controller, implying the optimality of our controller. The solving time of the centralized LQR controller is around $0.015$s, which also shows the scalability of our proposed method. Despite the better performance, centralized LQR solution requires global information for the implementation, and the controllers may be redesigned when the network changes.

\section{Conclusion}\label{sec:conclusion}

In this paper, we proposed a GRNN-based distributed optimal control framework that enables online distributed training. The distributed gradient computation during back-propagation was first illustrated using the chain rule. The weights of GRNN were updated using the idea of (consensus-based) distributed optimization. Then, the closed-loop stability was proved for the networked system under our proposed GRNN-based controller by assuming that the output of the nonlinear activation function is both local sector-bounded and slope-restricted.
A simulator was specifically developed to verify the effectiveness of our proposed GRNN-based controller.
Future works will generalize our proposed control framework to more general nonlinear networked systems and consider the design of the interconnection topologies.


    

\bibliographystyle{IEEEtran}
\bibliography{references}

\begin{thebibliography}{10}
\providecommand{\url}[1]{#1}
\csname url@samestyle\endcsname
\providecommand{\newblock}{\relax}
\providecommand{\bibinfo}[2]{#2}
\providecommand{\BIBentrySTDinterwordspacing}{\spaceskip=0pt\relax}
\providecommand{\BIBentryALTinterwordstretchfactor}{4}
\providecommand{\BIBentryALTinterwordspacing}{\spaceskip=\fontdimen2\font plus
\BIBentryALTinterwordstretchfactor\fontdimen3\font minus \fontdimen4\font\relax}
\providecommand{\BIBforeignlanguage}[2]{{%
\expandafter\ifx\csname l@#1\endcsname\relax
\typeout{** WARNING: IEEEtran.bst: No hyphenation pattern has been}%
\typeout{** loaded for the language `#1'. Using the pattern for}%
\typeout{** the default language instead.}%
\else
\language=\csname l@#1\endcsname
\fi
#2}}
\providecommand{\BIBdecl}{\relax}
\BIBdecl

\bibitem{fardad2011sparsity}
M.~Fardad, F.~Lin, and M.~R. Jovanovi{\'c}, ``Sparsity-promoting optimal control for a class of distributed systems,'' in \emph{Proceedings of the 2011 American Control Conference}, 2011, pp. 2050--2055.

\bibitem{fattahi2018transformation}
S.~Fattahi, G.~Fazelnia, J.~Lavaei, and M.~Arcak, ``Transformation of optimal centralized controllers into near-globally optimal static distributed controllers,'' \emph{IEEE Trans. on Autom. Control}, vol.~64, no.~1, pp. 66--80, 2018.

\bibitem{nedic2018distributed}
A.~Nedi{\'c} and J.~Liu, ``Distributed optimization for control,'' \emph{Annu. Rev. Control Robot. Auton. Syst.}, vol.~1, no.~1, pp. 77--103, 2018.

\bibitem{lee2016distributed}
S.~Lee and M.~M. Zavlanos, ``Distributed primal-dual methods for online constrained optimization,'' in \emph{Proc. of American Control Conf.}, 2016, pp. 7171--7176.

\bibitem{stewart2010cooperative}
B.~T. Stewart, A.~N. Venkat, J.~B. Rawlings, S.~J. Wright, and G.~Pannocchia, ``Cooperative distributed model predictive control,'' \emph{Syst. Control Lett.}, vol.~59, no.~8, pp. 460--469, 2010.

\bibitem{fan2023multi}
Z.~Fan, W.~Zhang, and W.~Liu, ``Multi-agent deep reinforcement learning-based distributed optimal generation control of dc microgrids,'' \emph{IEEE Trans. on Smart Grid}, vol.~14, no.~5, pp. 3337--3351, 2023.

\bibitem{farzanegan2024data}
B.~Farzanegan, M.~B. Menhaj, and A.~A. Suratgar, ``Data-driven distributed optimal control using neighbourhood optimization for nonlinear interconnected systems,'' \emph{J. of Optim. Theory and Appl.}, vol. 203, no.~1, pp. 1054--1082, 2024.

\bibitem{wang2020approximate}
D.~Wang, J.~Qiao, and L.~Cheng, ``An approximate neuro-optimal solution of discounted guaranteed cost control design,'' \emph{IEEE Trans. on Cybernetics}, vol.~52, no.~1, pp. 77--86, 2020.

\bibitem{gama2022distributed}
F.~Gama and S.~Sojoudi, ``Distributed linear-quadratic control with graph neural networks,'' \emph{Signal Processing}, vol. 196, p. 108506, 2022.

\bibitem{yang2021communication}
F.~Yang and N.~Matni, ``Communication topology co-design in graph recurrent neural network based distributed control,'' in \emph{Proc. of 60th Conf. on Decision and Control}, 2021, pp. 3619--3626.

\bibitem{fallin2025lyapunov}
B.~C. Fallin, C.~F. Nino, O.~S. Patil, Z.~I. Bell, and W.~E. Dixon, ``Lyapunov-based graph neural networks for adaptive control of multi-agent systems,'' \emph{arXiv preprint arXiv:2503.15360}, 2025.

\bibitem{yang2023learning}
S.~Yang, M.~Kaili, B.~Wang, T.~Yu, and H.~Zha, ``Learning to boost resilience of complex networks via neural edge rewiring,'' \emph{Trans. on Machine Learning Research}, 2023.

\bibitem{munikoti2022scalable}
S.~Munikoti, L.~Das, and B.~Natarajan, ``Scalable graph neural network-based framework for identifying critical nodes and links in complex networks,'' \emph{Neurocomputing}, vol. 468, pp. 211--221, 2022.

\bibitem{baahmed2023using}
A.~R. E.-M. Baahmed, G.~Andresini, C.~Robardet, and A.~Appice, ``Using graph neural networks for the detection and explanation of network intrusions,'' in \emph{J. Eur. Conf. on Mach. Learn. and Knowl. Discov. in Databases}.\hskip 1em plus 0.5em minus 0.4em\relax Springer, 2023, pp. 201--216.

\bibitem{almasan2022deep}
P.~Almasan, J.~Su{\'a}rez-Varela, K.~Rusek, P.~Barlet-Ros, and A.~Cabellos-Aparicio, ``Deep reinforcement learning meets graph neural networks: Exploring a routing optimization use case,'' \emph{Comput. Commun.}, vol. 196, pp. 184--194, 2022.

\bibitem{olshevskyi2024fully}
R.~Olshevskyi, Z.~Zhao, K.~Chan, G.~Verma, A.~Swami, and S.~Segarra, ``Fully distributed online training of graph neural networks in networked systems,'' \emph{arXiv preprint arXiv:2412.06105}, 2024.

\bibitem{wu2022stability}
W.~Wu, J.~Chen, and J.~Chen, ``Stability analysis of systems with recurrent neural network controllers,'' \emph{IFAC-PapersOnLine}, vol.~55, no.~12, pp. 170--175, 2022.

\bibitem{yin2021stability}
H.~Yin, P.~Seiler, and M.~Arcak, ``Stability analysis using quadratic constraints for systems with neural network controllers,'' \emph{IEEE Trans. on Autom. Control}, vol.~67, no.~4, pp. 1980--1987, 2021.

\bibitem{nedic2020distributed}
A.~Nedic, ``Distributed gradient methods for convex machine learning problems in networks: Distributed optimization,'' \emph{Signal Processing Magazine}, vol.~37, no.~3, pp. 92--101, 2020.

\bibitem{gowal2018effectiveness}
S.~Gowal, K.~Dvijotham, R.~Stanforth, R.~Bunel, C.~Qin, J.~Uesato, R.~Arandjelovic, T.~Mann, and P.~Kohli, ``On the effectiveness of interval bound propagation for training verifiably robust models,'' \emph{arXiv preprint arXiv:1810.12715}, 2018.

\bibitem{mei2017lasalle}
W.~Mei and F.~Bullo, ``Lasalle invariance principle for discrete-time dynamical systems: A concise and self-contained tutorial,'' \emph{arXiv preprint arXiv:1710.03710}, 2017.

\end{thebibliography}

\end{document}